\documentclass[journal]{IEEEtran}
\usepackage{cite}

\usepackage{graphicx}
\usepackage{booktabs}
\usepackage{multirow}
\usepackage{color}

\usepackage{amsmath}
\usepackage{amsthm}
\newtheorem{theorem}{\textit{Theorem}}

\newtheorem{assumption}{\textit{Assumption}}

\usepackage{algorithm}
\usepackage{algorithmic}

\usepackage{array}

\ifCLASSOPTIONcompsoc
 \usepackage[caption=false,font=normalsize,labelfont=sf,textfont=sf]{subfig}
\else
 \usepackage[caption=false,font=footnotesize]{subfig}
\fi
\usepackage{fixltx2e}
\usepackage{dblfloatfix}

\ifCLASSOPTIONcaptionsoff
 \usepackage[nomarkers]{endfloat}
\let\MYoriglatexcaption\caption
\renewcommand{\caption}[2][\relax]{\MYoriglatexcaption[#2]{#2}}
\fi
\usepackage{url}

\begin{document}
%
\title{Co-optimizing Bidding and Power Allocation of an EV Aggregator Providing Real-time Frequency Regulation Service}
%
%
%

\author{Ruike Lyu,~\IEEEmembership{Student Member,~IEEE}, Hongye Guo,~\IEEEmembership{Member,~IEEE}, Kedi Zheng,~\IEEEmembership{Member,~IEEE}, Mingyang Sun,~\IEEEmembership{Member,~IEEE}, and Qixin Chen,~\IEEEmembership{Senior Member,~IEEE}
\thanks{This work was supported in part by the International (NSFC-NWO) Joint Research Project of National Natural Science Foundation of China under Grant 52161135201, and in part by National Natural Science Foundation of China under Grant 52107102.

R. Lyu, H. Guo, K. Zheng, and Q. Chen are with the State Key Lab of Power Systems, the Department of Electrical Engineering, Tsinghua University, Beijing 100084, China. M. Sun is with the State Key Laboratory of Industrial Control Technology, Department of Control Science and Engineering, Zhejiang University, Hangzhou 310027, China.}}

\maketitle

\begin{abstract}
  The rapidly expanding scale of electric vehicle (EV) fleets and continuously decreasing battery costs are making vehicle-to-grid services a reality. In this paper, we study the interaction between the problems of an EV aggregator's bidding in the regulation market and power allocation (i.e., determining the (dis)charging powers of the EVs in regulation deployment). Although the two problems are coupled, they are often regarded as decoupled and optimized separately for complexity issues. However, failing to consider the coupling of bidding and power allocation can lead to a decline in the profit of the EV aggregator (EVA). In this paper, we propose a framework for co-optimizing EVA bidding and power allocation in the regulation market. The bidding model is formulated as a stochastic programming problem with embedded power allocation in discretized regulation signal scenarios. To meet the solution time requirement for regulation deployment, we further propose a power allocation model that can be solved online. It utilizes the Lagrange multipliers from the bidding problem to ensure that the allocation results correspond to the optimal solution of the bidding problem. The effect of the proposed framework on improving EVA profits and reducing degradation costs is verified in the case study.
\end{abstract}

\begin{IEEEkeywords}
  Vehicle-to-grid (V2G), electric vehicle (EV), aggregator, regulation market, bidding strategy, regulation allocation.
\end{IEEEkeywords}

\ifCLASSOPTIONpeerreview
\begin{center} \bfseries EDICS Category: 3-BBND \end{center}
\fi
%
\IEEEpeerreviewmaketitle
\section*{Nomenclature}
\addcontentsline{toc}{section}{Nomenclature}

\subsection*{Sets and Indices}
\begin{IEEEdescription}[\IEEEusemathlabelsep\IEEEsetlabelwidth{$k/K/\mathcal{K}$}]
  \item[$i/I$] Index/set of EVs.
  \item[$k/K$] Index/set of power segments.
  \item[$s/S$] Index/set of price scenarios.
  \item[$\hat{s}/\hat{S}$] Index/set of regulation signal scenarios.
  \item[$t/T$] Index/set of time intervals.
  \item[$\hat{t}$] Index of time within time interval $t$.
  \item[${\rm cur}$] Superscript for current time.
  \item[${\rm ch}$] Superscript for charging.
  \item[${\rm dis}$] Superscript for discharging.
  \item[${\rm deg}$] Superscript for degradation.
  \item[${\rm deploy}$] Superscript for deployment.
  \item[${\rm e}$] Superscript for energy.
  \item[${\rm cap}$] Superscript for capacity.
  \item[${\rm mil}$] Superscript for mileage.
\end{IEEEdescription}

\subsection*{Parameters and Constants}
\begin{IEEEdescription}[\IEEEusemathlabelsep\IEEEsetlabelwidth{$k/K/\mathcal{K}$}]
  \item[$\Delta t$] Length of time interval (hour, h).
  \item[$\Delta \hat{t}$] Regulation signal interval (second, s).
  \item[$\Delta t^{\rm req}$] Required time at the fully deployed output level (h).
  \item[$\Delta t^{\rm cur}$] Remaining time of current time interval (h).
  \item[$\Delta t^{\rm plug}_i$] Remaining plug-in time of EV $i$ before leaving (h).
  \item[$\delta_{\hat{s}}$] Value of regulation signal in scenario $\hat{s}$.
  \item[$\eta^{\rm dis(ch)}_i$] Discharging (charging) efficiency of EV $i$.
  \item[$\pi_{t, \hat{s}}$] Probability of regulation signal scenario $\hat{s}$ at time interval $t$.
  \item[$\pi_s$] Probability of price scenario $s$.
  \item[$a^{\rm mil}_t$] Resource-specific mileage multiplier for regulation at time interval $t$.
  \item[$E^{\rm cur}_i$] Current state of charge (SOC) of EV $i$ (kWh).
  \item[$E^{\rm min(max)}_{t, i}$] Lower (upper) SOC limit of EV $i$ at time interval $t$ (kWh).
  \item[$E^{\rm leave}_i$] Required SOC of EV $i$ when leaving (kWh).
  \item[$M$] A large number.
  \item[$\overline{P^{\rm dis(ch)}_{i, k}}$] Upper power limit for discharging (charging) at power segment $k$ of EV $i$ (kW).
  \item[$Pr^{\rm deg}_{i, k}$] Degradation cost rate at power segment $k$ of EV $i$ (\$/kWh).
  \item[$Pr^{\rm e}_{s, t}$] Energy price at time interval $t$ in scenario $s$ (\$/MWh).
  \item[$Pr^{\rm cap}_{s, t}$] Regulation capacity price at time interval $t$ in scenario $s$ (\$/MWh).
  \item[$Pr^{\rm mil}_{s, t}$] Regulation mileage price at time interval $t$ in scenario $s$ (\$/MW).
  \item[$s^{\rm perf}$] Performance score of providing regulation mileage.
  \item[$t^{\rm cur}$] Current time interval.
  \item[$\hat{t}^{\rm cur}$] Current time within the current time interval.
  \item[$t^{\rm end}$] The ending time interval under consideration.
  \item[$u_{t, i}$] Plug-in state of EV $i$ at time interval $t$.

\end{IEEEdescription}

\subsection*{Variables and Lagrange Multipliers}
\begin{IEEEdescription}[\IEEEusemathlabelsep\IEEEsetlabelwidth{$Income^{\rm mil}_{s, t}$}]
  \item[$Cost^{\rm deg}_{t, \hat{s}}$] Degradation cost at time interval $t$ in regulation signal scenario $\hat{s}$ (\$).
  \item[$Cost_{\hat{t}, \hat{s}}$] Cost of providing regulation at time $\hat{t}$ in regulation signal scenario $\hat{s}$ (\$).
  \item[$E_{t, i}$] Expected SOC of EV $i$ at the beginning of time interval $t$ (kWh).
  \item[$Income^{\rm e}_{s, t}$] Energy income from the energy market at time interval $t$ in price scenario $s$ (\$).
  \item[$Income^{\rm ch}_{s, t}$] Income from the EV owners for charging (\$).
  \item[$Income^{\rm cap}_{s, t}$] Regulation capacity income at time interval $t$ in price scenario $s$ (\$).
  \item[$Income^{\rm mil}_{s, t}$] Regulation mileage income at time interval $t$ in price scenario $s$ (\$).
  \item[$Income^{\rm deploy}_{t, s, \hat{s}}$] Energy income for deploying regulation at time interval $t$ in price scenario $s$ with regulation signal scenario $\hat{s}$ (\$).
  \item[$Profit(t)$] Total profit for future time intervals starting from time intervals $t$ (\$).
  \item[$Profit(\hat{t})$] Total profit for future time intervals starting from time $\hat{t}$ within time intervals $t$ (\$).
  \item[$P^{\rm dis(ch)}_{t, \hat{s}, i, k}$] Discharging (charging) power at segment $k$ of EV $i$ at time interval $t$ in regulation scenario $\hat{s}$ (kW).
  \item[$P^{\rm dis(ch)}_{\hat{t}, \hat{s}, i, k}$] Discharging (charging) power at segment $k$ of EV $i$ at time $\hat{t}$ in regulation scenario $\hat{s}$ (kW).
  \item[$P_{t}$] Energy bid at time interval $t$ (MW).
  \item[$R_{t}$] Regulation capacity bid at time interval $t$ (MW).
  \item[$\Delta P^{\rm dis(ch)}_{\hat{t}, \hat{s}}$] The power deviated from the regulation signal in the direction of discharging (charging) at time $\hat{t}$ in regulation signal scenario $\hat{s}$ (MW).
  \item[$\lambda^{\rm E}_{t, i}$] Lagrange multiplier of the SOC constraint regarding EV $i$ at time interval $t$ (\$/kWh).
  \item[$\lambda^{\rm bal}_{t, \hat{s}}$] Lagrange multiplier of the power balance constraint regarding regulation signal scenario $\hat{s}$ at time interval $t$ (\$/kW).
  \item[$\lambda^{\rm bal}_{\hat{t}, \hat{s}}$] Lagrange multiplier of the power balance constraint regarding regulation signal scenario $\hat{s}$ at time $\hat{t}$ (\$/kW).
  \item[$\mu^{\rm dis(ch)}_{t, \hat{s}, i, k}$] Lagrange multiplier of the discharging (charging) power limit constraint regarding segment $k$ of EV $i$ in regulation signal scenario $\hat{s}$ at time interval $t$ (\$/kW).
  \item[$\mu^{\rm dis(ch)}_{\hat{t}, \hat{s}, i, k}$] Lagrange multiplier of the discharging (charging) power limit constraint regarding segment $k$ of EV $i$ in regulation signal scenario $\hat{s}$ at time $\hat{t}$ (\$/kW).

\end{IEEEdescription}

\section{Introduction}
%
%
%
%
\IEEEPARstart{V}{ehicle-to-grid} (V2G) services are becoming a reality. 
In the context of the decarbonization of the power system and the electrification of the transportation system, the scale of electric vehicle (EV) fleets is rapidly expanding. 
By 2030, the number of EVs in China is expected to reach 80 million~\cite{zheng_systematic_2020}. Additionally, the volume-weighted pack and cell price of EVs fell by more than 75\% from 2014 to 2021~\cite{bloomberg2021electric}. 
Considerable battery capacity and continuously reduced battery degradation costs will make it economical for EVs to support the operation of the smart grid in the near future~\cite{zheng_economic_2022}, while the electricity market will provide economic incentives for V2G implementations~\cite{teng_technical_2020}. 
As a result of drivers' habits, especially the habits of the current mainstream nighttime charging~\cite{mckinsey_report_2018}, the plug-in time of EVs is usually much longer than the time required for charging. 
The resulting charging flexibility, i.e., the time of charging or discharging can be adjusted, enables EVs to provide load shift~\cite{gan_optimal_2013}, frequency regulation~\cite{kaur_coordinated_2019}, reserve~\cite{duan_bidding_2021} and other services for the grid. 
However, supporting the operation of the grid may bring certain losses to EV owners, such as battery degradation\cite{sun_real-time_2014}, insufficient battery state of charge (SOC) for driving demand~\cite{peng_optimal_2017}, etc., which can be compensated by the income from participating in the electricity market.

Due to the limited capacity of a single EV, EV aggregators (EVAs) usually need to aggregate a large number of EVs to meet the threshold for electricity market participation~\cite{ansari_coordinated_2015}. 
Such an EVA-led charging management mode has received widespread attention and is roughly divided into three links: the EVA bidding in the market\cite{sortomme_optimal_2011,sortomme_optimal_2012, jin_optimizing_2013, vagropoulos_optimal_2013, ansari_coordinated_2015, sarker_optimal_2016, vatandoust_risk-averse_2019, duan_bidding_2021}, the EVA allocating (dis)charging power to EVs\cite{escudero-garzas_fair_2012, sun_real-time_2014, peng_optimal_2017, vagropoulos_real-time_2016}, and the revenue distribution between the EVA and EV owners~\cite{li2018robust, chen2019non}. 
The main idea of EVA bidding is to formulate a strategy to maximize the profit subject to the EV charging constraints and driving demand of EV owners.
The bidding model for the unidirectional and bidirectional charging modes were formulated in~\cite{sortomme_optimal_2011} and~\cite{sortomme_optimal_2012}, respectively.
The bidding for an EVA with energy storages was studied in~\cite{jin_optimizing_2013}.
To cope with the uncertainty of market prices and regulation signals, the EVA optimal bidding strategy can be formulated using stochastic optimization~\cite{vagropoulos_optimal_2013}, fuzzy optimization~\cite{ansari_coordinated_2015}, robust optimization~\cite{li2018robust}, and the (conditional) value-at-risk method~\cite{vatandoust_risk-averse_2019, duan_bidding_2021}. 

\begin{table*}[!t]
  \renewcommand{\arraystretch}{1.0}
  \caption{Comparison of Relevant Literature}
  \label{tab_reference}
  \centering
  \setlength{\tabcolsep}{1mm}{
    {
      \begin{tabular}{cccc}
        \toprule
        \begin{tabular}[c]{@{}c@{}}Ref. on EVA  Bidding\end{tabular}   &
        \begin{tabular}[c]{@{}c@{}}Consideration for  Power Allocation\end{tabular}   &
        \begin{tabular}[c]{@{}c@{}}Ref. on Power  Allocation\end{tabular}   &
        \begin{tabular}[c]{@{}c@{}}Consideration for Bidding Profit\end{tabular}   \\   \midrule
        \begin{tabular}[c]{@{}c@{}}~\cite{ansari_coordinated_2015,sortomme_optimal_2011, sortomme_optimal_2012, vagropoulos_optimal_2013,   jin_optimizing_2013,vatandoust_risk-averse_2019, duan_bidding_2021}\end{tabular}  &
        \begin{tabular}[c]{@{}c@{}}Allocated Proportional  to the\\ allocated regulation capacity\end{tabular}  &
        \begin{tabular}[c]{@{}c@{}}~\cite{escudero-garzas_fair_2012, sun_real-time_2014, vagropoulos_real-time_2016} \end{tabular}  &
        \begin{tabular}[c]{@{}c@{}}Not considered  or not related\end{tabular}    \\
        ~\cite{sarker_optimal_2016} &
        \begin{tabular}[c]{@{}c@{}}Evenly allocated to the EVs\end{tabular}  &
        ~\cite{peng_optimal_2017}   &
        \begin{tabular}[c]{@{}c@{}}Maximizing the regulation \\capacity at the next time interval\end{tabular}    \\
        \begin{tabular}[c]{@{}c@{}}Our bidding model\end{tabular}  &
        \begin{tabular}[c]{@{}c@{}}Optimized simultaneously \end{tabular}  &
        \begin{tabular}[c]{@{}c@{}}Our power allocation\end{tabular}  &
        \begin{tabular}[c]{@{}c@{}}Optimal for the bidding profit\end{tabular}    \\ \bottomrule
      \end{tabular}
    }}
\end{table*}

After bidding, the EVA needs to allocate the power for deploying regulation, i.e., to determine the (dis)charging power of the EVs so that its net output can accurately follow the regulation signals issued by the independent system operators (ISO). 
Conceptually, in the EVA bidding problem, it is also necessary to consider the power allocation strategy because different allocation strategies will induce different costs, thus leading to differing EVA profits.
However, due to the uncertainty of regulation signals, it remains a challenge to reasonably model the impact of power allocation in EVA bidding, especially when faced with a large number of EVs. As a compromise, in the above studies on EVA bidding, the power allocation strategies are usually simplified to make the model complexity and computation time acceptable. 
For example, in ~\cite{sortomme_optimal_2012}, the aggregated regulation capacity is divided and allocated to each EV, and the change in EV (dis)charging power in response to regulation signals is set proportional to the regulation capacity allocated to the EVs. 
This is essentially a heuristic power allocation strategy, which is likely to deviate from the optimal allocation and therefore lead to a decline in the EVA profit. 
In \cite{li_emission-concerned_2013}, to reduce the model size, the aggregated EV fleet constraints replace the constraints of each EV, which may make the actual power allocation infeasible. 

Under the pay-for-performance mechanism\cite[Section 3.2.1]{bjm2022bpm11}, following the regulation signals that are issued every few seconds means that the (dis)charging power of thousands of EVs needs to be determined online.
A limited number of studies focus on allocating power for deploying regulation between EVs, and the main idea is to optimize the deployment cost while meeting the driving demand of EV owners.
In~\cite{escudero-garzas_fair_2012} and~\cite{sun_real-time_2014}, the power for deploying regulation is allocated with the goals of the fairness to EV owner income and of maximizing the welfare of EV owners, respectively, which are inconsistent with the goal of maximizing EVA profit.
In~\cite{vagropoulos_real-time_2016}, charging weights for the EVs are set according to the remaining time of charging and the amount of electricity to be charged, which determine the charging priority of the EVs in power allocation.
In~\cite{peng_optimal_2017}, the aim of power allocation is to maximize the net capacity for providing regulation at the next time interval. 
However, the impact of power allocation on the EVA profit is not considered or considered only heuristically in these studies, which also leads to a decline in the EVA profit. 

In summary, there are gaps in existing research regarding the coupling between EVA bidding and power allocation (see Table~\ref{tab_reference}). 
 In the existing studies on EVA bidding, the naive strategies of proportionally or evenly allocating the power fail to utilize the heterogeneous characteristics of EVs to optimize the cost of providing regulation.
We argue that EVA bidding and power allocation are essentially coupled and should be co-optimized. The main reasons are as follows:
\paragraph{} As a result of the market operation process, bidding and power allocation are alternating and process coupled (Fig.~\ref{fig_framework_time}). The bids for a certain time interval determine the regulation power to be allocated in the interval, and the power allocation results determine the feasible bidding range for the next time intervals by affecting the SOCs of the EVs. Considering this coupling is more important in the real-time market environment, where the EVA can modify the bids based on the latest information, including forecasts for prices and regulation signals, as well as SOCs of EVs to increase its profits~\cite{ortegavazquez_optimal_2014}.
\paragraph{} Bidding and power allocation are to optimize EVA profits in the long term (within the future hours) and in the short term (within the next few seconds), respectively, and the profits in the different time horizons need to be balanced. Conceptually, any approach that optimizes short-term profits without considering long-term impacts is inherently a greedy algorithm that naturally may deviate from the optimal.
\paragraph{} In practice, failing to consider such coupling will lead to a non-negligible decline in the profits of the EVA, which is verified by the case study using real data.

To the best of our knowledge, there is no research that reasonably models this coupling and provides a practical framework that meets the time constraints required for regulation deployment. Here we propose a framework for co-optimizing the bidding and power allocation of an EVA in the regulation market. Power allocation is embedded in the bidding model, which is formulated as stochastic programming and can be solved by commercial solvers. 
Stochastic optimizations may become inappropriate for online applications, especially when the number of scenarios and time intervals is large. 
To meet the computation time requirement for the regulation deployment, we design a power allocation model that can be solved online. 
Using the Lagrange multipliers of the bidding problem as the problem parameter, the power allocation problem equivalently considers the impact of the allocation results on the expected profit of EVA bidding in different scenarios in the future.
We theoretically prove that for any given regulation signal, the optimal solution to the power allocation problem corresponds to the optimal solution to the bidding problem. 
In the case study, the effect of co-optimizing bidding and power allocation on increasing the EVA profit and reducing the battery degradation costs is verified.

Our contributions are as follows:

\begin{itemize}
  \item We propose a framework to co-optimize the bidding and power allocation of an EVA in the regulation market. The bidding model is formulated as stochastic optimization with embedded power allocation in discretized regulation signal scenarios.

  \item To meet the computation time requirement for regulation deployment, we propose a power allocation model using the Lagrange multipliers from the bidding model, which can be calculated online. This determines the allocated power that corresponds to the optimal solution of the bidding model.

  \item The case study verifies the effect of considering the coupling between bidding and power allocation on improving the EVA profit and reducing EV battery degradation costs, thus proving the necessity of reasonably considering this coupling.

\end{itemize}

The remainder of this paper is organized as follows: Section~\ref{sec_framework} describes the market environment and an overview of the proposed framework. Section~\ref{sec_model_bid} describes the optimal bidding model with embedded power allocation. Section~\ref{sec_model_allocation} describes the power allocation model based on Lagrange multipliers. Section~\ref{sec_case_study} describes the case study. Section~\ref{sec_conclusion} contains the conclusions and prospects for future research.

\section{Framework for EVA Participation in Real-time Regulation Market}\label{sec_framework}

\subsection{Market Framework}
Regulation, also known as secondary frequency control, is procured by the ISO to maintain a real-time balance between the generation and the load. We adopt a general real-time regulation market mechanism and do not distinguish between upward and downward regulation. Resources provide regulation by following regulation signals (also called auto generation control, AGC), which are issued by the ISO~\cite{bjm2022bpm11}, typically every 4 or 2 seconds. As compensation, the resources receive capacity payments and mileage (performance) payments.
The EVA is treated as a price taker that only bids for capacity in the real-time market~\cite{vatandoust_risk-averse_2019}. Specifically, the EVA can declare a regulation capacity with a minimum price, so that its accepted capacity is guaranteed to be the declared capacity.

\subsection{Compensation for EV owners}
We assume that the EVA signs contracts with the EV owners, providing for the following terms: a. The EV owners park and plug the EVs in the charging piles controlled by the EVA for a certain period; b. On the premise of meeting the charging demand of EV owners, the EVA participates in the electricity market by managing (dis)charging of EVs; c. The EV owners are compensated according to the actual (dis)charging of the EVs and the resulting battery degradation cost~\cite{sarker_optimal_2016}. 
The specific distribution of profits involves the cooperative game between the EVA and EV owners~\cite{chen_competition_2022}, which is beyond the scope of this paper.

Since the time of arrival and departure and the corresponding SOCs of the EVs can be provided by the above contracts, we regard these as deterministic parameters, and they are known by the EVA. In addition, with the charging demand given, the charging fee (possibly with a discount~\cite{vatandoust_risk-averse_2019}) can also be considered a constant, so it does not appear in our model, although it affects the EVA's profits.  Meanwhile, we assume that the equipment for (dis)charging control, metering, etc., meets the needs of the proposed framework, and we do not focus on these specific implementations.

\subsection{Overview of the Proposed Framework for Co-optimizing Bidding and Power Allocation}\label{sec_framework_overview}

Here, we present an overview of the proposed framework (Fig. \ref{fig_overview}), including the optimal bidding module (B) with an embedded power allocation submodule (A1) and the Lagrange-multiplier-based power allocation module (A2).

\begin{figure}[!t]
  \centering
  \includegraphics[width=3.5in]{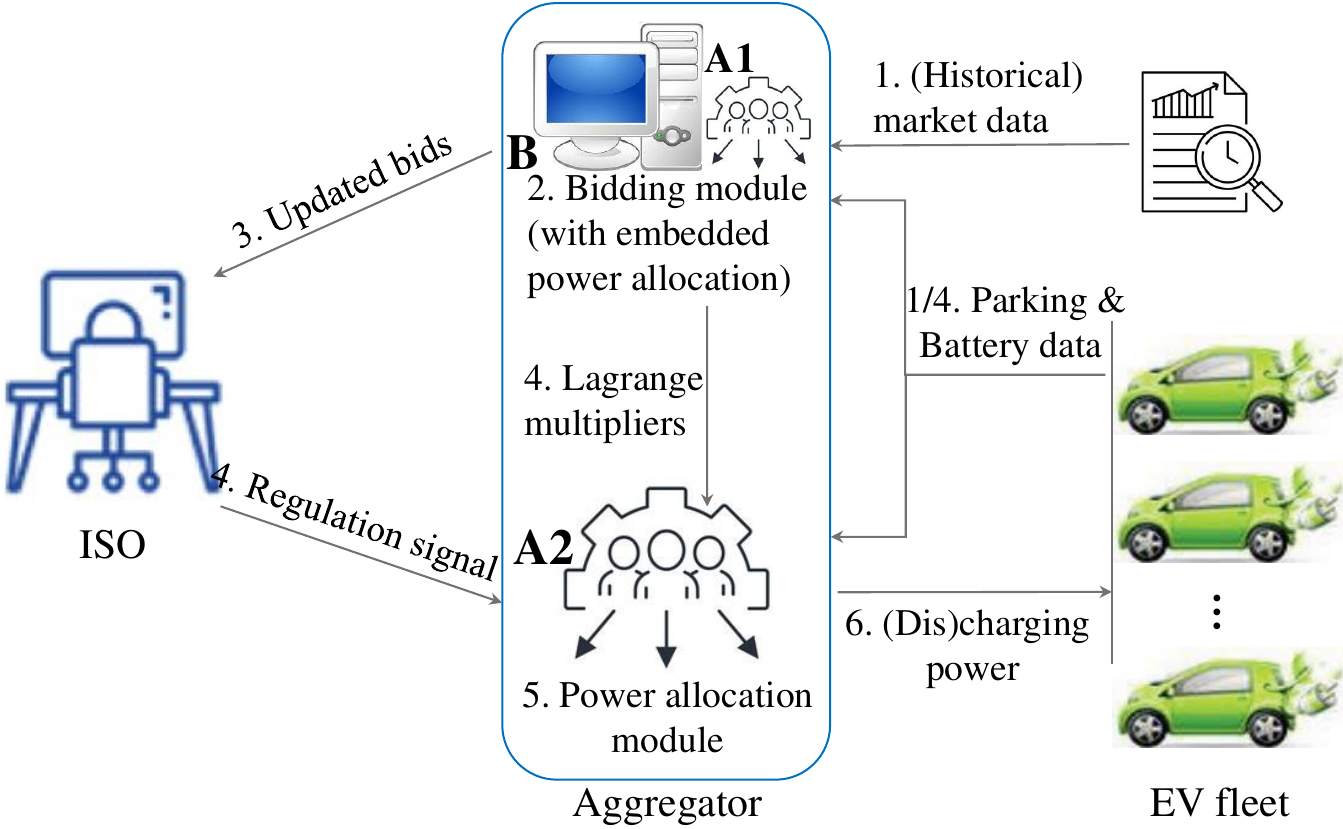}
  \caption{Overview of the proposed framework for EVA bidding and power allocation co-optimization.}
  \label{fig_overview}
\end{figure}
\begin{figure}[!t]
  \centering
  \includegraphics[width=2.8in]{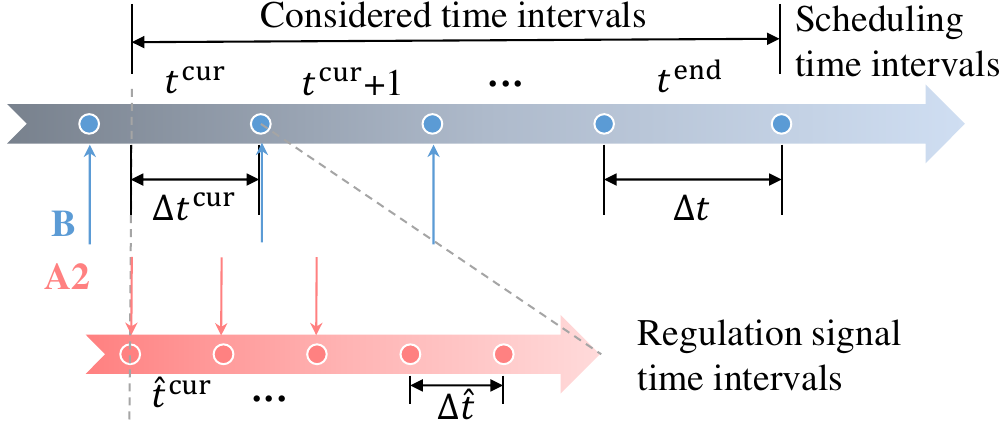}
  \caption{Timeline of the proposed framework: (B) Update the bids (periodically, e.g., every 30 minutes); (A2) Allocate the power for deploying regulation to EVs (upon receiving an AGC signal, e.g., every 2 seconds).}
  \label{fig_framework_time}
\end{figure}

The bidding module (B) runs periodically (e.g., every 30 min). The EVA collects the arriving and leaving time, current SOCs, charging needs, and battery parameters of the EVs and generates forecasts about prices and regulation signals according to historical market data. Based on these parameters, the bidding module outputs the latest optimal bids, and the Lagrange multipliers can be obtained without additional computational overhead. The power allocation submodule (A1) is embedded in the bidding module (B). Here, we divide the value region of regulation signals into discretized scenarios (see Fig.~\ref{fig_scenario} in the case study as a demonstration, where the granularity of the discretized regulation signal scenarios, i.e., the length of the value intervals is 0.1), and the power allocation in each scenario is simultaneously optimized.

Nevertheless, the power allocation results in the bidding model only correspond to the discretized regulation signal scenarios (e.g., $\delta_{\hat{s}} \in$ [0, 0.1), ..., [0.9, 1),...), while the actual regulation signal may take any value in its nearly continuous range (e.g., $\delta_{\hat{s}} = 0.9998$). Therefore, the bidding model cannot be directly used to respond to regulation signals. To address this problem, an intuitive way is to adopt a sufficiently small granularity of the discretized regulation signal scenarios, making them close to continuous (e.g., [0, 0.0001), ..., [0.9998, 0.9999), ...). However, the resulting large-scale problem cannot be solved online or is even computationally intractable. Instead, we design a power allocation module (A2) to quickly decide the (dis)charging power of the EVs in regulation deployment.

The power allocation module (A2) runs upon receiving the latest regulation signal issued by the ISO (e.g., every 2 s). Using the Lagrange multipliers from the bidding model as problem parameters, the power allocation problem outputs the (dis)charging power of the EVs to follow the regulation signal.

\section{Optimal Bidding Model with Embedded Power Allocation}\label{sec_model_bid}
In this section, we propose an optimal bidding model for an EVA to decide its bids in the real-time market, which addresses the cost of providing regulation with optimized power allocation. If not specified, the EVA uses the latest information.

\subsection{Profit Model}\label{sec_model_profit}
The EVA aims to maximize $Profit(t^{\rm cur})$, the sum of its profits for future time intervals from the first (current) time interval $t^{\rm cur}$ to the end $t^{\rm end}$, as formulated in (\ref{profit_total}).
\begin{equation}\label{profit_total}
  \begin{aligned}
    Profit(t^{\rm cur}) = & \underset{s \in S}{\Sigma}\pi_s
    \underset{t \in T}{\Sigma}(Income^{\rm e}_{s, t}
    + Income^{\rm cap}_{s, t}                                                        \\
                          & + Income^{\rm mil}_{s, t} + Income^{\rm ch}_{s, t}       \\
                          & + \underset{\hat{s} \in \hat{S}}{\Sigma}\pi_{t, \hat{s}}
    (Income^{\rm deploy}_{s, t,\hat{s}}
    - Cost^{\rm deg}_{t, \hat{s}}))
  \end{aligned}
\end{equation}
In Equation (\ref{profit_total}), the time intervals are set to be consistent with the scheduling time intervals of the market.
$T = \{t^{\rm cur}, t^{\rm cur}+1, ..., t^{\rm end}\}$ is the finite set of time intervals under consideration.

The specific components of the EVA profit are formulated in (\ref{profit_energy})-(\ref{cost_deg}), $\forall t \in T, s \in S, \hat{s} \in \hat{S}$:
\begin{flalign}\label{profit_energy} 
  &\ Income^{\rm e}_{s, t} =  Pr^{\rm e}_{s, t} \cdot P_{t}\cdot \Delta t
  &\\ \label{profit_cap} 
  &\ Income^{\rm cap}_{s, t} =  Pr^{\rm cap}_{s, t} \cdot R_{t} \cdot s^{\rm perf}\cdot \Delta t
  &\\ \label{profit_mil}
  &\ Income^{\rm mil}_{s, t} =  Pr^{\rm mil}_{s, t} \cdot R_{t} \cdot a^{\rm mil}_{t} \cdot s^{\rm perf}\cdot \Delta t
  &\\ \label{profit_deploy}
  &\ Income^{\rm deploy}_{s, t,\hat{s}} =  R_{t} \cdot \delta_{\hat{s}} \cdot Pr^{\rm e}_{s, t}\cdot \Delta t
  &\\ \label{cost_deg}
  &\ Cost^{\rm deg}_{t,\hat{s}} =   \underset{i \in I}{\Sigma} \underset{k \in K}{\Sigma} Pr^{\rm deg}_{i, k} \cdot P^{\rm dis}_{t, \hat{s}, i, k} \cdot \Delta t &
\end{flalign}

The EVA obtains income from the electricity market by energy arbitrage (\ref{profit_energy}) and providing regulation services. The income from the regulation market includes the payments for deploying regulation (\ref{profit_deploy}) (in which the exchanged energy is settled at real-time energy prices), the regulation capacity (\ref{profit_cap}), and the regulation mileage (\ref{profit_mil}).
In (\ref{profit_cap})-(\ref{profit_mil}),
$s^{\rm perf}$ is the performance score for providing the regulation mileage, and $a^{\rm mil}_{t}$ is the resource-specific mileage multiplier for regulation at time interval t. $s^{\rm perf}$ and $a^{\rm mil}_{t}$ are specified periodically by the ISO~\cite[Section 4.5.5]{bjm2022bpm12} and regarded as constants known by the EVA.
In (\ref{profit_deploy}),
$\delta_{\hat{s}}$ is the normalized value of the regulation signal in scenario $\hat{s}$, and a demonstration of the specific values is given in Fig.~\ref{fig_scenario} in the case study.
We regard injecting energy into the grid as the positive direction of energy and regulation signals.
As mentioned above, the income for charging EVs is a constant in the objective function that is irrelevant to the bidding decision. Therefore, we omit the expression of $Income^{\rm ch}_{s, t}$ in the model, although it does affect the profit of the EVA. 
In (\ref{cost_deg}), we use the degradation model for EV batteries in~\cite{han_practical_2014}, where the battery degradation is based on the depth of discharge (DOD) and can be linearly approximated by discharge segments. Equivalently, we express the battery degradation by segment-based discharge power, which is equivalent to segment-based discharge for a given length of time.
Note that due to the increasing marginal degradation cost of EV batteries, EVs will not charge and discharge at the same time as the EVA maximizes its profit, and therefore, 0-1 variables representing charge/discharge states of power segments need not be introduced.

In the above the EVA profit model, the battery degradation cost represents the compensation for EV users, and the EVA can adopt different profit distribution methods and modify the battery degradation cost item in the model accordingly.

\subsection{Constraints}\label{sec_model_constraint}
In the bidding model, the EVA needs to meet the technical and charging need constraints of the EVs (\ref{cons_positiveCapacity})-(\ref{cons_energyLimit}).
\begin{flalign}\label{cons_positiveCapacity}
  &\ 0 \le R_{t}, \ \forall t \in T
  &\\
  &\ P_{t} + R_{t}\cdot \delta_{\hat{s}} - \underset{i \in I}{\Sigma} \underset{k \in K}{\Sigma} (P^{\rm dis}_{t, \hat{s}, i, k} - P^{\rm ch}_{t, \hat{s}, i, k}) = 0 : \lambda^{\rm bal}_{t, \hat{s}},
  \nonumber
  &\\ \label{cons_balance}
  &\ \ \forall t \in T, \hat{s} \in \hat{S}
  &\\
  &\ \underset{i \in I}{\Sigma} (E_{t, i} - E^{\rm min}_{t, i})\cdot \eta^{\rm dis}_i \ge  R_{t} \cdot \Delta t^{\rm req} +  P_{t} \cdot \Delta t,
  \nonumber
  &\\ \label{cons_reg1}
  &\ \ \forall t \in T
  &\\
  &\ \underset{i \in I}{\Sigma} (- E_{t, i} + E^{\rm max}_{t, i}) / \eta^{\rm ch}_i \ge  R_{t} \cdot \Delta t^{\rm req} -  P_{t} \cdot \Delta t,
  \nonumber
  &\\ \label{cons_reg2}
  &\ \ \forall t \in T
  &\\
  &\ 0 \le P^{\rm dis(ch)}_{t, \hat{s}, i, k} \le u_{t, i} \cdot \overline{P^{\rm dis(ch)}_{i, k}} : \underline{\mu^{\rm dis(ch)}_{t, \hat{s}, i, k}}, \overline{\mu^{\rm dis(ch)}_{t, \hat{s}, i, k}},
  \nonumber
  &\\ \label{cons_powerLimit}
  &\ \ \forall  t \in T, \hat{s} \in \hat{S}, i \in I, k \in K
  &\\ \label{cons_energyInit}
  &\ E^{\rm cur}_{i} - E_{t, i} = 0, \ \forall i \in I, t = t^{\rm cur}
  &\\
  &\ E_{t, i} +  \underset{\hat{s} \in \hat{S}}{\Sigma}\pi_{t, \hat{s}} \underset{k \in K}{\Sigma} (P^{\rm ch}_{t, \hat{s}, i, k}\cdot \eta^{\rm ch}_i - P^{\rm dis}_{t, \hat{s}, i, k} / \eta^{\rm dis}_i) \cdot \Delta t 
  \nonumber
  &\\ \label{cons_energyChange}
  &\ \ - E_{t + 1, i} = 0 : \lambda^{\rm E}_{t, i},  \ \forall t \in T, i \in I
  &\\ \label{cons_energyLimit}
  &\ E^{\rm min}_{t, i} \le E_{t, i} \le E^{\rm max}_{t, i}, \ \forall t \in T, i \in I &
\end{flalign}

(\ref{cons_positiveCapacity}) states that the bids for regulation capacity are non-negative. The power balance constraint (\ref{cons_balance}) indicates that the net power of the EVs follows the regulation signal in each regulation signal scenario. Since the regulation signal $\delta_{\hat{s}} \in [-1, 1]$ can take the value of -1 and 1, (\ref{cons_balance}) also ensures that the fully deployed output power is satisfied.
Resources providing regulation are usually required by the ISO to be able to maintain the minimum and maximum output for some time $\Delta t^{\rm req}$, such as 15 min (PJM)~\cite{masiello2014business}, stated in (\ref{cons_reg1}) and (\ref{cons_reg2}), respectively.
The (dis)charging power limit at the power segments is expressed in (\ref{cons_powerLimit}), where $u_{t, i}$ is the plug-in state of EV $i$ at time
interval $t$, and $u_{t, i} = 1(0)$ represents being (not) connected to the charging slot.
The initial SOCs of the EVs, i.e., the SOCs at the beginning of the current time interval are stipulated by (\ref{cons_energyInit}), where $E^{\rm cur}_{i}$ is a problem parameter and $E_{t, i}$ $(t = t^{\rm cur})$ is a variable.
The expected change in SOC across the time intervals is given by (\ref{cons_energyChange}), where $E_{t, i}$ represents the SOC at the beginning of time interval $t$; 
for formal consistency, let $E_{t^{\rm end}+1, i}$, i.e., $E_{t, i}$ $(t = t^{\rm end}+1)$ denote the SOC of EV $i$ at the end of time interval $t^{\rm end}$.
The SOC limit of the EVs is expressed by (\ref{cons_energyLimit}).
$\lambda^{\rm bal}_{t, \hat{s}}$,
$\underline{\mu^{\rm dis(ch)}_{t, \hat{s}, i, k}}$,
$\overline{\mu^{\rm dis(ch)}_{t, \hat{s}, i, k}}$, and
$\lambda^{\rm E}_{t + 1, i}$ are the Lagrange multipliers of the corresponding constraints.
Following convention, we adopt Lagrange multipliers with the objective function written as min. and the inequalities written as $\le 0$, although they are expressed in the above form for convenience.

For the sake of compactness, the constraints regarding the time intervals when EV $i$ has not arrived or already left are not listed separately, but these constraints can be expressed nonexplicitly by (\ref{cons_energyInit})-(\ref{cons_energyLimit}) through different problem parameters. For example, regarding the time intervals when EV $i$ has left, set $E^{\rm min}_{t, i}$ in (\ref{cons_energyLimit}) to the required minimum SOC, which states the charging need of the EV.

In the bidding model, the decision variables for bids are $\{P_{t}, R_{t} | t \in T\}$. The power
allocation submodule (A1) is embodied in the decision variables for power allocation (i.e., $\{P^{\rm dis(ch)}_{t, \hat{s}, i, k} | t \in T, \hat{s} \in \hat{S}, i \in I, k \in K\}$) and the corresponding constraints for each regulation signal scenario (i.e., (\ref{cost_deg})(\ref{cons_balance})(\ref{cons_powerLimit})(\ref{cons_energyChange})). 
Conceptually, the power allocation submodule embedded in the bidding module must correspond to the actual power allocation module in regulation deployment. We will show such correspondence in Section~\ref{sec_optimality}.
The subscripts of $P^{\rm dis(ch)}_{t, \hat{s}, i, k}$ indicate that the same power allocation strategy is applied regarding the same regulation signal scenario $\hat{s}$ at time interval $t$. This assumption is reasonable since the scheduling time interval in the real-time market is short enough, and the change in the problem parameters within $t$ is negligible.
The entire model, formulated as a stochastic linear programming problem, can be efficiently solved by commercial solvers.

By modifying the length of the current time interval in the corresponding constraints, the model can be used for bidding at any time within a time interval. Specifically, at time $\hat{t}^{\rm cur}$ within the current time interval $t^{\rm cur}$, i.e., $\hat{t}^{\rm cur} \in t^{\rm cur}$, replace the length of the time interval with the remaining time of $\Delta t^{\rm cur}$ in constraints (\ref{profit_energy})-(\ref{cost_deg}) and (\ref{cons_energyChange}) regarding $t^{\rm cur}$. For instance, in (\ref{cons_energyChange}) when $t = t^{\rm cur}$:
\begin{equation}\label{cons_energyChange_mid}
  \begin{aligned}
     & E_{t^{\rm cur}, i} +  \underset{\hat{s} \in \hat{S}}{\Sigma}\pi_{t^{\rm cur}, \hat{s}} \underset{k \in K}{\Sigma} (P^{\rm ch}_{t^{\rm cur}, \hat{s}, i, k}\cdot \eta^{\rm ch}_i - P^{\rm dis}_{t^{\rm cur}, \hat{s}, i, k} / \eta^{\rm dis}_i) \Delta t^{\rm cur} \\
     &  
    - E_{t^{\rm cur} + 1, i} = 0 : \lambda^{\rm E}_{t^{\rm cur}, i},  \ \forall i \in I
  \end{aligned}
\end{equation}
Meanwhile, the bids $P_{t^{\rm cur}}$ and $R_{t^{\rm cur}}$ for the current time interval usually can no longer be modified but should be regarded as problem parameters.

In the following, unless specified otherwise, we call the optimal bidding problem at time $\hat{t}^{\rm cur} \in t^{\rm cur}$ the bidding problem, represented as:
\begin{flalign}\label{optimal_bid_mid}
  &\ {\rm max.} \ Profit(\hat{t}^{\rm cur})&\\
\nonumber
  &\ {\rm s.t.} \ \ \ (\ref{profit_total})-(\ref{cons_energyLimit}), {\rm \ \ constraints \ \ regarding \ \ }t^{\rm cur} {\rm \ \ revised}&
\end{flalign}

For the convenience of presentation, the model formulated above is more general than detailed.
Here we provide a few possible modifications for specific situations:
\begin{itemize}
  \item The bids for the closed time intervals should take the latest declared value, regarded as the problem parameters instead of decision variables.
  \item If the market has restrictions or penalties for bid modification or power deviation, the model can be modified accordingly, as in~\cite{jin_optimizing_2013}.
  \item In our consideration of the connection point of EV, the plug-in state is known leveraging the contracts and the parking time granularity is consistent with time intervals, and we can also treat it as a random variable~\cite{yao_optimization_2018} or reduce its granularity~\cite{vagropoulos_optimal_2013}.
  \item The cost of deploying regulation (\ref{cost_deg}) can be modified if the EVA compensates EV owners in other ways.
  \item If the EVA is required to ensure that the power flow of the distribution network does not exceed the limit, linearized power flow constraints of the distribution network can be added to the model as in~\cite{vatandoust_risk-averse_2019}.
\end{itemize}

\section{Power Allocation Model Based on Lagrange Multipliers}\label{sec_model_allocation}

\subsection{Power Allocation Model}
At time $\hat{t}^{\rm cur} \in t^{\rm cur}$, upon receiving a regulation signal $\delta_{\hat{s}}$ from the ISO, the EVA runs the following power allocation problem (\ref{optimal_allocation})-(\ref{cons_energyLimit3}):
\begin{flalign}\label{optimal_allocation}
  &\ {\rm min.} \  Cost_{\hat{t}^{\rm cur}, \hat{s}} + M \cdot (\Delta P^{\rm dis}_{\hat{t}^{\rm cur}, \hat{s}} + \Delta P^{\rm ch}_{\hat{t}^{\rm cur}, \hat{s}})
  &\\ 
  &\ {\rm s.t.} \ \ \ Cost_{\hat{t}^{\rm cur}, \hat{s}} =   \underset{i \in I}{\Sigma} \underset{k \in K}{\Sigma} (Pr^{\rm deg}_{i, k} \cdot P^{\rm dis}_{\hat{t}^{\rm cur}, \hat{s}, i, k}
  \nonumber
  &\\ \label{cons_cost2}
  &\ \qquad  + \lambda^{\rm E}_{t^{\rm cur}, i} \cdot (P^{\rm ch}_{\hat{t}^{\rm cur}, \hat{s}, i, k}\cdot \eta^{\rm ch}_i - P^{\rm dis}_{\hat{t}^{\rm cur}, \hat{s}, i, k} / \eta^{\rm dis}_i)) \cdot \Delta \hat{t}
  &\\ 
  &\ \qquad P_{t^{\rm cur}} + R_{t^{\rm cur}}\cdot \delta_{\hat{s}} - \underset{i \in I}{\Sigma} \underset{k \in K}{\Sigma} (P^{\rm dis}_{\hat{t}^{\rm cur}, \hat{s}, i, k} - P^{\rm ch}_{\hat{t}^{\rm cur}, \hat{s}, i, k})
  \nonumber
  &\\ \label{cons_balance2}
  &\ \qquad - (\Delta P^{\rm dis}_{\hat{t}^{\rm cur}, \hat{s}} - \Delta P^{\rm ch}_{\hat{t}^{\rm cur}, \hat{s}}) = 0 : \lambda^{\rm bal}_{\hat{t}^{\rm cur}, \hat{s}}
  &\\ 
  &\ \qquad  0 \le P^{\rm dis(ch)}_{\hat{t}^{\rm cur}, \hat{s}, i, k} \le u_{t^{\rm cur}, i} \cdot \overline{P^{\rm dis(ch)}_{i, k}} : \underline{\mu^{\rm dis(ch)}_{\hat{t}^{\rm cur}, \hat{s}, i, k}}, \overline{\mu^{\rm dis(ch)}_{\hat{t}^{\rm cur}, \hat{s}, i, k}},
  \nonumber
  &\\ \label{cons_powerLimit2}
  &\  \qquad \forall i \in I, k \in K
  &\\ \label{cons_powerLimit_unbal}
  &\  \qquad 0 \le \Delta P^{\rm dis(ch)}_{\hat{t}^{\rm cur}, \hat{s}}
  &\\ 
  &\  \qquad E^{\rm min}_{t^{\rm cur}, i} \le E^{\rm cur}_{i} +   \underset{k \in K}{\Sigma} (P^{\rm ch}_{\hat{t}^{\rm cur}, \hat{s}, i, k}\cdot \eta^{\rm ch}_i
  \nonumber
  &\\ \label{cons_energyLimit2}
  &\  \qquad - P^{\rm dis}_{\hat{t}^{\rm cur}, \hat{s}, i, k} / \eta^{\rm dis}_i) \cdot \Delta \hat{t} \le E^{\rm max}_{t^{\rm cur}, i}, \forall i \in I
  &\\ 
  &\ \qquad  E^{\rm leave}_{i} - \underset{k \in K}{\Sigma} \overline{P^{\rm ch}_{i, k}}\cdot \eta^{\rm ch}_i \cdot t^{rest}_{i} \le E^{\rm cur}_{i}
  \nonumber
  &\\ \label{cons_energyLimit3}
  &\  \qquad +   \underset{k \in K}{\Sigma} (P^{\rm ch}_{\hat{t}^{\rm cur}, \hat{s}, i, k}\cdot \eta^{\rm ch}_i
  - P^{\rm dis}_{\hat{t}^{\rm cur}, \hat{s}, i, k} / \eta^{\rm dis}_i) \cdot \Delta \hat{t}, \forall i \in I &
\end{flalign}

$M$ is a large number, and the second term in the objective function (\ref{optimal_allocation}) is a penalty term, which forces the EVA to follow the regulation signal as closely as possible. 
The cost is calculated from the battery degradation cost that is immediately generated by EV discharge in addition to the impact on future EVA profit accounted for through the Lagrange multiplier $\lambda^{\rm E}_{t^{\rm cur}, i}$, which will be analyzed in Section~\ref{sec_optimality}. 
The power deviated from the regulation signal is expressed in (\ref{cons_balance2}). (\ref{cons_energyLimit2}) and (\ref{cons_energyLimit3}) ensure that the SOCs of EVs will be acceptable after the power allocation at $\hat{t}$, where (\ref{cons_energyLimit3}) guarantees that EV $i$ can be charged to the required minimum SOC at departure in the worst-case scenario. Note that in the bidding problem, the EV SOC limit (\ref{cons_energyLimit}) is satisfied only in the mathematical expectation, but (\ref{cons_energyLimit2}) and (\ref{cons_energyLimit3}) here ensure the real-world feasibility of power allocation. $\lambda^{\rm bal}_{\hat{t}^{\rm cur}, \hat{s}}$,
$\underline{\mu^{\rm dis(ch)}_{\hat{t}^{\rm cur}, \hat{s}, i, k}}$,
$\overline{\mu^{\rm dis(ch)}_{\hat{t}^{\rm cur}, \hat{s}, i, k}}$ are the Lagrange multipliers of the corresponding constraints.

We call the above problem (\ref{optimal_allocation}) the power allocation problem. The decision variables are $\{P^{\rm dis(ch)}_{\hat{t}^{\rm cur}, \hat{s}, i, k} | i \in I, k \in K\}$, and the power allocation at different time intervals or regulation signal scenarios need not be considered in (\ref{optimal_allocation}). Therefore, the scale of the power allocation problem (in the form of linear programming) is much smaller than that of the bidding problem (\ref{optimal_bid_mid}) and can be solved online. Nevertheless, the optimality of the power allocation result needs to be guaranteed, and we will analyze it below.

\subsection{Optimality of Power Allocation}\label{sec_optimality}

In this paper, the optimality of the power allocation problem refers to the fact that the optimal solution to the power allocation problem (\ref{optimal_allocation}) $\{P^{\rm *dis(ch)}_{\hat{t}^{\rm cur}, \hat{s}, i, k} | i \in I, k \in K\}$ corresponds to the optimal solution to the bidding problem (\ref{optimal_bid_mid}) $\{P^{\rm *dis(ch)}_{t^{\rm cur}, \hat{s}, i, k} | i \in I, k \in K\}$, $\forall \hat{s} \in \hat{S}$.

Intuitively, the Lagrange multiplier $\lambda^{\rm E}_{t^{\rm cur}, i}$ of the SOC constraint (\ref{cons_energyChange_mid}) represents the shadow price of $E_{t^{\rm cur}, i}$ for the profit of EVA bidding, i.e., the cost of increasing a unit SOC of EV $i$ at $t^{\rm cur}$.
By considering the impact on future EVA profit through $\lambda^{\rm E}_{t^{\rm cur}, i}$ together with the immediately induced degradation cost, the solution to the power allocation problem will also maximize EVA profit.

In the mathematical meaning, introducing Lagrange multipliers of the bidding problem into the power allocation problem makes the optimality conditions of the two problems correspond. In the remainder of this section, we theoretically prove the optimality of the power allocation problem. Our analysis is based on the following assumptions:
\begin{assumption}\label{assumption_feasible}
  The bidding problem (\ref{optimal_bid_mid}) and the power allocation problem (\ref{optimal_allocation}) are strictly feasible.
\end{assumption}
\begin{assumption}\label{assumption_express}
  The actual regulation signals can be expressed by the regulation signal scenarios, i.e., $\hat{s} \in \hat{S}$ and $\pi_{t, \hat{s}} > 0, \forall t \in T$, $\forall \delta_{\hat{s}} \in [-1, 1]$.
\end{assumption}
\begin{assumption}\label{assumption_forecast}
  The occurrence frequency of the regulation signals is consistent with the forecasted probability.
\end{assumption}
Due to the flexibility of charging, Assumption~\ref{assumption_feasible} is realistic. Assumption~\ref{assumption_express} does not hold because the value region of regulation signals is closer to continuous rather than discrete. Nevertheless, we examined the applicability of Assumption~\ref{assumption_express} with suitable granularity in the case study. Assumption~\ref{assumption_forecast} indicates that the SOC constraints (\ref{cons_energyLimit}) in the bidding model are met not only on the expectation but also in actual deployment.

\begin{theorem}\label{theorem_optimality}
  \textbf{Optimality.} If the Lagrange multipliers in (\ref{cons_cost2}) correspond to the optimal solution of the bidding problem, then the optimal solution to the power allocation problem for $\delta_{\hat{s}}$ at $\hat{t}^{\rm cur} \in t^{\rm cur}$ corresponds to the optimal solution to the bidding problem for $\delta_{\hat{s}}$ at $t^{\rm cur}$.
\end{theorem}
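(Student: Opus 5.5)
Both problems are linear programs, so the plan is to compare their KKT conditions directly. By Assumption~\ref{assumption_feasible}, Slater's condition holds for both, so KKT is necessary and sufficient for optimality. I would write the Lagrangian of the bidding problem (\ref{optimal_bid_mid}) in minimization form. Then I would extract the stationarity conditions with respect to the allocation variables $P^{\rm dis(ch)}_{t^{\rm cur},\hat s,i,k}$ for a fixed scenario $\hat s$.

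These variables enter the objective only through $\pi_{t^{\rm cur},\hat s}Pr^{\rm deg}_{i,k}P^{\rm dis}\Delta t^{\rm cur}$. They enter the constraints only through the balance constraint (\ref{cons_balance}), the power limits (\ref{cons_powerLimit}), and the SOC transition (\ref{cons_energyChange_mid}). The transition carries the factor $\pi_{t^{\rm cur},\hat s}\Delta t^{\rm cur}$ and the multiplier $\lambda^{\rm E}_{t^{\rm cur},i}$. Stationarity for discharge therefore reads
\begin{equation*}
\pi_{t^{\rm cur},\hat s}\Delta t^{\rm cur}\bigl(Pr^{\rm deg}_{i,k}-\lambda^{\rm E}_{t^{\rm cur},i}/\eta^{\rm dis}_i\bigr)+\lambda^{\rm bal}_{t^{\rm cur},\hat s}-\underline{\mu^{\rm dis}_{t^{\rm cur},\hat s,i,k}}+\overline{\mu^{\rm dis}_{t^{\rm cur},\hat s,i,k}}=0,
\end{equation*}
with an analogous condition for charge, where the term becomes $+\lambda^{\rm E}_{t^{\rm cur},i}\eta^{\rm ch}_i$ and $\lambda^{\rm bal}$ appears with the opposite sign. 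Complementary slackness on (\ref{cons_powerLimit}) completes the system. The bids $P_{t^{\rm cur}},R_{t^{\rm cur}}$ are parameters at $\hat t^{\rm cur}$, and $E$ enters only via $\lambda^{\rm E}$. So the per-scenario block is exactly this set of conditions.

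Next I would write the KKT system of the power allocation problem (\ref{optimal_allocation}) with $\lambda^{\rm E}_{t^{\rm cur},i}$ fixed as data. Its stationarity condition is identical except that the positive factor $\pi_{t^{\rm cur},\hat s}\Delta t^{\rm cur}$ is replaced by $\Delta\hat t$. I would then build a candidate primal--dual point from the bidding optimum:
\begin{itemize}
\item $P^{\rm dis(ch)}_{\hat t^{\rm cur},\hat s,i,k}=P^{\rm *dis(ch)}_{t^{\rm cur},\hat s,i,k}$ and $\Delta P^{\rm dis(ch)}=0$;
\item every bidding multiplier in the block rescaled by $\rho=\Delta\hat t/(\pi_{t^{\rm cur},\hat s}\Delta t^{\rm cur})>0$, which is well defined by Assumption~\ref{assumption_express}.
\end{itemize}
Scaling preserves sign and complementary-slackness conditions, so stationarity and the multiplier sign conditions hold. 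Primal feasibility of (\ref{cons_balance2}) with zero deviation follows from (\ref{cons_balance}).

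The penalty variables need $M-\lambda^{\rm bal}\ge0$ and $M+\lambda^{\rm bal}\ge0$, which holds once $M$ exceeds $|\rho\lambda^{\rm bal}_{t^{\rm cur},\hat s}|$. The new constraints (\ref{cons_energyLimit2})--(\ref{cons_energyLimit3}) must be inactive, so their multipliers can be set to zero. I would argue this from Assumption~\ref{assumption_forecast}, which ensures (\ref{cons_energyLimit}) holds in actual deployment, together with strict feasibility. Hence the constructed point satisfies KKT, and the bidding allocation is optimal for the allocation problem. For the converse direction (any allocation optimum corresponds to a bidding optimum), I would use that LP optimal sets share the same dual: any other optimum of (\ref{optimal_allocation}) satisfies KKT with the same scaled multipliers, so it can be substituted back into the bidding block without changing feasibility or the objective, since the objective is linear and the dual certificates agree.

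The main obstacle is justifying that (\ref{cons_energyLimit2})--(\ref{cons_energyLimit3}) and the deviation penalty do not alter the KKT correspondence. These constraints have no counterpart in the bidding problem, and the bidding SOC constraints hold only in expectation. I would first try to lean on Assumptions~\ref{assumption_feasible} and~\ref{assumption_forecast} to assert that these constraints are slack at the bidding-induced allocation. If that fails, I would state the correspondence modulo those constraints being nonbinding.
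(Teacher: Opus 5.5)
Your forward direction is the paper's proof. You isolate the $(t^{\rm cur},\hat s)$ block of the bidding KKT system and observe that the allocation problem's stationarity differs only by $\Delta\hat t$ in place of $\pi_{t^{\rm cur},\hat s}\Delta t^{\rm cur}$. You then set $P^{\rm dis(ch)}_{\hat t^{\rm cur},\hat s,i,k}=P^{\rm *dis(ch)}_{t^{\rm cur},\hat s,i,k}$, keep $\lambda^{\rm *E}_{t^{\rm cur},i}$ as data, rescale $\lambda^{\rm *bal}_{t^{\rm cur},\hat s}$ and the $\mu^{*}$ by $\rho$, and invoke KKT sufficiency for LPs.

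You are more careful than the paper on the deviation variables. The paper just uses Assumption~\ref{assumption_forecast} to set them to zero and drop (\ref{cons_powerLimit_unbal}); you correctly note that dual feasibility needs $M\ge|\rho\lambda^{\rm *bal}_{t^{\rm cur},\hat s}|$.

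For (\ref{cons_energyLimit2})--(\ref{cons_energyLimit3}), you give them zero multipliers, so complementary slackness is automatic. You therefore only need the bidding allocation to be \emph{feasible} for them, not inactive. Inactivity of (\ref{cons_energyLimit3}) can genuinely fail, namely for an EV that must charge at full power until departure. The paper treats that case separately (both problems then prescribe full-power charging of EV $i$). Your construction covers it too, since full-power charging satisfies (\ref{cons_energyLimit3}) with equality. For (\ref{cons_energyLimit2}), the paper argues as follows: (\ref{cons_energyInit}) pins $E_{t^{\rm cur},i}=E^{\rm cur}_i$, so strict feasibility of (\ref{cons_energyLimit}) under Assumption~\ref{assumption_feasible} puts $E^{\rm cur}_i$ strictly inside its bounds. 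A single step of length $\Delta\hat t$ then cannot leave them.

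The step that fails is your converse. It is true that any other optimum $\tilde P$ of (\ref{optimal_allocation}) satisfies KKT with the same scaled multipliers, because any primal optimum and any dual optimum of an LP satisfy complementary slackness together. But this does not let you substitute $\tilde P$ into the $\hat s$-block of (\ref{optimal_bid_mid}) ``without changing feasibility''. That block is not separable. Constraint (\ref{cons_energyChange_mid}) aggregates each EV's net energy $\sum_k(P^{\rm ch}_{t^{\rm cur},\hat s,i,k}\eta^{\rm ch}_i-P^{\rm dis}_{t^{\rm cur},\hat s,i,k}/\eta^{\rm dis}_i)$ over scenarios into $E_{t^{\rm cur}+1,i}$. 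That quantity then feeds (\ref{cons_energyLimit}), (\ref{cons_reg1})--(\ref{cons_reg2}) and all later periods.

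Alternative allocation optima typically come from ties, e.g.\ two EVs with equal $Pr^{\rm deg}_{i,k}-\lambda^{\rm E}_{t^{\rm cur},i}/\eta^{\rm dis}_i$, and they shift net energy between EVs by a finite amount. The multiplier $\lambda^{\rm E}_{t^{\rm cur},i}$ certifies only the first-order (supergradient) value of such a shift. A nearly tight future SOC bound of one EV can therefore make every completion of the substituted point infeasible or strictly suboptimal for the bidding problem. So the converse can fail without extra hypotheses.

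The paper does not prove it either. It only shows that the bidding block is \emph{an} optimal solution of the allocation LP, tacitly treating ``the optimal solution'' as unique. You can fix this in one of three ways:
\begin{itemize}
\item drop the converse;
\item assume the allocation optimum is unique;
\item restrict the converse to allocation optima with the same per-EV net energy.
\end{itemize}
In the last case your substitution does work. Constraint (\ref{cons_energyChange_mid}) is untouched. With $M>|\rho\lambda^{\rm *bal}_{t^{\rm cur},\hat s}|$, deviations vanish at every optimum. Equal allocation cost then forces equal degradation cost, and hence equal bidding profit.
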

\begin{proof}
  See Appendix~\ref{theorem_optimality_app}.
\end{proof}

\begin{theorem}\label{theorem_invariance}
  \textbf{Time invariance.} If starting from time $\hat{t}^{\rm cur} \in t^{\rm cur}$, for each regulation signal $\delta_{\hat{s}}$, allocate the power of the EVs according to the optimal solution of the bidding problem until time $\hat{t}'^{\rm cur} \in t^{\rm cur}$, $\hat{t}'^{\rm cur} > \hat{t}^{\rm cur}$, then the optimal solution to ${\rm max.} \ Profit(\hat{t}^{\rm cur})$ is also optimal to ${\rm max.} \ Profit(\hat{t}'^{\rm cur})$.
\end{theorem}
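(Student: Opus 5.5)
The argument is a feasibility-plus-restriction argument on the linear program \eqref{optimal_bid_mid}, carried out directly on its primal form. Write $x^*$ for the optimal solution of $\max Profit(\hat{t}^{\rm cur})$. Its components are the allocations $P^{*{\rm dis(ch)}}_{t^{\rm cur},\hat{s},i,k}$ for every scenario $\hat{s}$, the trajectory $E^*_{t,i}$ for $t>t^{\rm cur}$, and the future bids $P^*_t,R^*_t$ for $t>t^{\rm cur}$.

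\textbf{Step 1: SOC at the later time.} Over $[\hat{t}^{\rm cur},\hat{t}'^{\rm cur})$ the EVA implements $P^{*}_{t^{\rm cur},\hat{s},i,k}$ whenever signal $\delta_{\hat{s}}$ arrives. By Assumption~\ref{assumption_forecast}, the empirical frequency of each $\hat{s}$ on this sub-interval equals $\pi_{t^{\rm cur},\hat{s}}$. Hence the realized SOC at $\hat{t}'^{\rm cur}$ is
\[
E'^{\rm cur}_i = E^{\rm cur}_i + \Sigma_{\hat{s}}\pi_{t^{\rm cur},\hat{s}}\Sigma_k\bigl(P^{*\rm ch}\eta^{\rm ch}_i - P^{*\rm dis}/\eta^{\rm dis}_i\bigr)(\Delta t^{\rm cur}-\Delta t'^{\rm cur}),
\]
where $\Delta t'^{\rm cur}$ is the remaining time at $\hat{t}'^{\rm cur}$. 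Substituting into \eqref{cons_energyChange_mid} written for $\hat{t}'^{\rm cur}$ gives
\[
E'^{\rm cur}_i + \Sigma_{\hat{s}}\pi\Sigma_k(\cdots)\Delta t'^{\rm cur} = E^*_{t^{\rm cur}+1,i}.
\]
So with the same allocation variables, the SOC at the start of $t^{\rm cur}+1$ is unchanged.

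\textbf{Step 2: feasibility.} All other constraints of the problem at $\hat{t}'^{\rm cur}$ are identical to those at $\hat{t}^{\rm cur}$. These are \eqref{cons_positiveCapacity}--\eqref{cons_powerLimit} and \eqref{cons_energyLimit} for $t>t^{\rm cur}$, together with the balance and limits for $t^{\rm cur}$, where $P_{t^{\rm cur}},R_{t^{\rm cur}}$ are fixed parameters. The only differences are the initial SOC parameter and the remaining length. The constraints \eqref{cons_reg1}--\eqref{cons_reg2} at $t^{\rm cur}$ involve only the fixed bids and $E_{t^{\rm cur},i}$. They hold at the new SOC because under Assumption~\ref{assumption_forecast} the SOC moves linearly between two feasible endpoints, and these constraints are convex in $E$. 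So $x^*$ restricted to the remaining variables is feasible for the later problem.

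\textbf{Step 3: optimality via decomposition.} The objective from $\hat{t}^{\rm cur}$ equals the realized (expected) profit on $[\hat{t}^{\rm cur},\hat{t}'^{\rm cur})$ plus the objective from $\hat{t}'^{\rm cur}$. Each $t^{\rm cur}$ term is proportional to the remaining length, so it splits linearly. Suppose some $y$ were strictly better for $\max Profit(\hat{t}'^{\rm cur})$. Concatenate $x^*$ on the first sub-interval with $y$ afterwards. Because the time-$t^{\rm cur}$ allocation must be a single variable per $\hat{s}$, I would instead form a time-weighted average of the two allocations over $t^{\rm cur}$. It has weights $(\Delta t^{\rm cur}-\Delta t'^{\rm cur})/\Delta t^{\rm cur}$ and $\Delta t'^{\rm cur}/\Delta t^{\rm cur}$, and equals $y$ for later intervals. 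By linearity of the constraints this averaged point is feasible for the original problem, reproduces the SOC $E_{t^{\rm cur}+1}$ from $y$, and has objective strictly larger than that of $x^*$. This contradicts the optimality of $x^*$.

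\textbf{Main obstacle.} The averaging construction in Step 3 must be checked carefully. The balance constraint \eqref{cons_balance} for $t^{\rm cur}$ is preserved because both allocations meet it with the same fixed $P_{t^{\rm cur}},R_{t^{\rm cur}}$, and the box constraints are convex. The objective must also be verified to be exactly additive, with the degradation and deployment terms scaled by durations. An alternative route, if averaging proves awkward, is to check that the KKT multipliers of $x^*$ remain valid for the later problem. The stationarity conditions do not depend on $E^{\rm cur}$, and after rescaling by the remaining length the multipliers of the $t^{\rm cur}$ constraints stay consistent.
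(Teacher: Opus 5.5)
Your Steps 1 and 3 are sound, and they prove the theorem by a genuinely different route from the paper. The paper argues on the dual side. After the same SOC substitution as your Step 1 (its \eqref{cons_energyChange_mid2}--\eqref{cons_energyChange_mid4}), it takes the KKT multipliers of ${\rm max.}\ Profit(\hat{t}^{\rm cur})$, keeps $\lambda^{\rm E}_{t^{\rm cur},i}$ and $\lambda^{\rm bal}_{t^{\rm cur},\hat{s}}$, rescales the bound multipliers $\mu$ by $\Delta t'^{\rm cur}/\Delta t^{\rm cur}$, and concludes by KKT sufficiency for LPs. Your time-weighted average is instead a primal exchange (principle-of-optimality) argument, and it checks out:
the averaged allocation satisfies \eqref{cons_balance} and \eqref{cons_powerLimit} at $t^{\rm cur}$, since both allocations meet them with the same fixed $P_{t^{\rm cur}}, R_{t^{\rm cur}}$;
it keeps $E_{t^{\rm cur},i}=E^{\rm cur}_i$;
it reproduces $y$'s value of $E_{t^{\rm cur}+1,i}$ via \eqref{cons_energyChange_mid} and your formula for $E'^{\rm cur}_i$;
and the degradation terms combine so that its gain over $x^*$ in the original problem equals exactly the gain of $y$ over $x^*$ in the later problem.

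Your route is more elementary and avoids multiplier bookkeeping, which is delicate here. For \eqref{dual_cons3} to hold with $\Delta t'^{\rm cur}$ in place of $\Delta t^{\rm cur}$, $\lambda^{\rm bal}_{t^{\rm cur},\hat{s}}$ has to be rescaled together with the $\mu$'s; the paper's proof leaves it unscaled. By contrast, $\lambda^{\rm E}_{t^{\rm cur},i}$ must not be rescaled, because it also enters the stationarity condition for $E_{t^{\rm cur}+1,i}$, which carries no $\Delta t^{\rm cur}$. Your closing remark about rescaling ``the multipliers of the $t^{\rm cur}$ constraints'' should be made precise in this way. What the dual route buys in return is a certificate that $\lambda^{\rm *E}_{t^{\rm cur},i}$ remains a valid multiplier of the later problem. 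That is what lets Theorem~\ref{theorem_optimality} be reapplied at $\hat{t}'^{\rm cur}$ without updating the parameters in \eqref{cons_cost2}. Your primal argument proves the statement as written, but not that corollary.

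The one justification that does not hold is at the end of Step 2. In the later problem, \eqref{cons_reg1}, \eqref{cons_reg2} and \eqref{cons_energyLimit} for $t=t^{\rm cur}$ are evaluated at $E_{t^{\rm cur},i}=E'^{\rm cur}_i$. Your segment argument would need the endpoint $E^*_{t^{\rm cur}+1,i}$ to satisfy these same $t^{\rm cur}$ constraints. It is only known to satisfy their $t^{\rm cur}+1$ versions, which involve different bids $P^*_{t^{\rm cur}+1}, R^*_{t^{\rm cur}+1}$ and possibly different SOC limits. For instance, if the EVA is a net seller during $t^{\rm cur}$ and \eqref{cons_reg1} is nearly tight at $E^{\rm cur}_i$, the aggregate SOC falls and the $t^{\rm cur}$ version can fail at $E'^{\rm cur}_i$.

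The paper never addresses this; it simply asserts that everything except \eqref{cons_energyInit} and \eqref{cons_energyChange_mid} is unchanged. So this is not a gap relative to its proof, but you should replace the convexity claim. Once $E_{t^{\rm cur},i}$ is pinned and $P_{t^{\rm cur}}, R_{t^{\rm cur}}$ are fixed, these $t^{\rm cur}$ constraints contain no decision variables. They therefore hold exactly when the bidding problem posed at $\hat{t}'^{\rm cur}$ is feasible, which is what Assumption~\ref{assumption_feasible} (applied at $\hat{t}'^{\rm cur}$) provides. Your Step 3 never needs these constraints, since the averaged point keeps $E_{t^{\rm cur},i}=E^{\rm cur}_i$.
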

\begin{proof}
  See Appendix~\ref{theorem_invariance_app}.
\end{proof}

Theorem~\ref{theorem_optimality} and Theorem~\ref{theorem_invariance} together indicate that during a specific time interval, the optimal bidding problem only needs to be solved once, and the obtained Lagrange multipliers can be used as the parameters of the power allocation problem without updating.
However, the forecasts for prices and regulation signals may be inaccurate and time-varying, and Theorem~\ref{theorem_invariance} does not hold. To cope with this, a practical method is to solve the bidding problem periodically and update the bids and the Lagrange multipliers (Algorithm~\ref{alg_framework}).

\begin{algorithm}[!t]
  \caption{Co-optimizing Bidding and Power Allocation.}
  \label{alg_framework}
  \begin{algorithmic}[1]
    \renewcommand{\algorithmicrequire}{\textbf{Input:}}
    \REQUIRE
    market data, EV parking and battery data
    \renewcommand{\algorithmicrequire}{\textbf{Output:}}
    \REQUIRE Updated bids $\{P_{t}, R_{t}\}$ and EV power allocation for any AGC signal $\{P^{\rm dis(ch)}_{\hat{t}^{\rm cur}, \hat{s}, i, k}\}$.
    \renewcommand{\algorithmicensure}{\textbf{B. Optimal bidding with embedded power allocation}}
    \ENSURE (Triggered periodically, e.g., every 30 min) \
    \STATE generate forecasted market prices $\{Pr^{\rm e/cap/mil}_{s, t}\}$
    \STATE generate forecasted distribution of AGC signals $\{\pi_{t, \hat{s}}\}$
    \STATE optimize the bids $\{P_{t}, R_{t}\}$ and the power allocation in the discretized AGC signal scenarios $\{P^{\rm dis(ch)}_{t, \hat{s}, i, k}\}$ via (\ref{optimal_bid_mid})
    \STATE update the bids to the market and the Lagrangian multipliers $\{\lambda^{\rm E}_{t^{\rm cur}, i}\}$ to \textbf{A2}.
    \renewcommand{\algorithmicensure}{\textbf{A2. Power allocation based on Lagrangian multipliers}}
    \ENSURE  (Triggered by receiving an AGC signal, e.g., every 2 s) \
    \STATE optimize the power allocation regarding the received AGC signal $\{P^{\rm dis(ch)}_{\hat{t}^{\rm cur}, \hat{s}, i, k}\}$ via (\ref{optimal_allocation})
    \STATE update the (dis)charging power to the EVs and the SOCs of the EVs to \textbf{B}.
  \end{algorithmic}
\end{algorithm}

\section{Case Study}\label{sec_case_study}
We used CPLEX (V12.4) and MATLAB (R2021a) with YALMIP~\cite{Lofberg2004} to solve the optimization problems on a workstation with an Intel Core i9-10900X CPU (3.7 GHz) and 128 GB RAM.

\subsection{Basic Data and Assumptions}

\begin{table*}[!t]
  \renewcommand{\arraystretch}{1.0}
  \caption{Parameters of the Three Types of EVs}
  \label{tab_ev_parameter}
  \centering
  \begin{tabular}{cccccccccc}
    \toprule
    Type                       &
    \begin{tabular}[c]{@{}c@{}}Discharging \\Power Limit   \\   (kW)\end{tabular} &
    \begin{tabular}[c]{@{}c@{}}Charging \\Power Limit   \\   (kW)\end{tabular} &
    \begin{tabular}[c]{@{}c@{}}Initial \\ SOC    \\ (kWh)\end{tabular} &
    \begin{tabular}[c]{@{}c@{}}Required \\   SOC  \\ (kWh)\end{tabular} &
    \begin{tabular}[c]{@{}c@{}}Lower  \\  SOC Limit \\  (kWh)\end{tabular} &
    \begin{tabular}[c]{@{}c@{}}Upper  \\ SOC Limit  \\  (kWh)\end{tabular} &
    \begin{tabular}[c]{@{}c@{}}Discharging \\  Efficiency  \end{tabular} &
    \begin{tabular}[c]{@{}c@{}}Charging \\  Efficiency  \end{tabular} &
    \begin{tabular}[c]{@{}c@{}}Degradation \\  Factor  \\  (\$/kWh)\end{tabular}                                                        \\
    \midrule
    a                          & 7.68 & 7.68 & 30 & 81 & 15 & 90 & 0.95 & 0.95 & 0.10 \\
    b                          & 7.68 & 7.68 & 20 & 54 & 10 & 60 & 0.92 & 0.92 & 0.15 \\
    c                          & 7.68 & 7.68 & 12 & 32 & 6  & 36 & 0.90 & 0.90 & 0.20 \\
    \bottomrule
  \end{tabular}
\end{table*}

The proposed framework was investigated in the nighttime slow charging scenario with 402 EVs, and the charging level was set as the widely used SAE J1772 Level 2~\cite{J1772_201710}.
Three types of EVs (134 EVs for each type) were considered. The parameters are listed in Table~\ref{tab_ev_parameter}.
Considering the low power limit of slow charging and the shallow depth of discharge when providing regulation~\cite{he_optimal_2016}, we assumed that the battery degradation cost was proportional to the net discharged electricity.
Batteries of different types and qualities can feature different degradation characteristics~\cite{fang_efficient_2022}.
Adopting the EV battery degradation model in~\cite{han_practical_2014} with the updated battery price in~\cite{bloomberg2021electric}, the degradation factors of the EVs (i.e., degradation cost per discharged electricity) were set between \$0.10-0.20/kWh.

A larger degradation factor was set for Type-c EVs, considering that the battery with a smaller capacity will face a larger DOD and therefore more degradation when discharged the same amount of electricity.

The parking patterns of the EVs were generated according to the distribution of the arrival and departure time in~\cite{li_emission-concerned_2013}, except that the EVs that are always plug-in are treated as arriving at the beginning and leaving at the end so that the existing methods for comparison apply.

\begin{figure}[!t]
  \centering
  \includegraphics[width=3.5in]{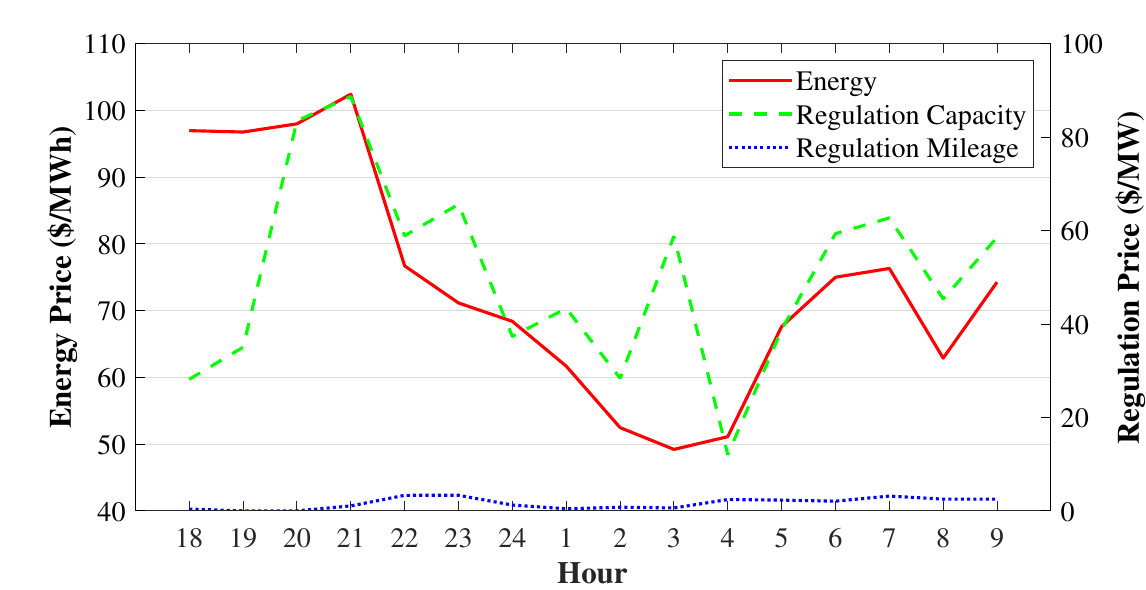}
  \caption{Real-time market prices (17.07.2022 18:00 to 18.07.2022 9:00).}
  \label{fig_price}
\end{figure}

We used historical market data in July 2022 from PJM.
For demonstration purposes, we assumed that the prices can be accurately forecasted while the regulation-signal-related parameters in the bidding problem are forecasted results.
The hourly real-time system energy prices and regulation prices from 17.07.2022 18:00 to 18.07.2022 9:00 were used as the real-time market prices (Fig.~\ref{fig_price}).
We also assumed that before entering a certain time interval, the bids regarding the time interval can be modified, although the closing time is earlier in some markets.
Other market parameters included the following:
the length of (scheduling) time interval $\Delta t = 1$ h,
the required maintenance time at the fully deployed output level $\Delta t^{\rm req} = 0.25$ h,
the performance score $s^{\rm perf}$ set to be 0.984 which is the same as the performance score example 1 given by PJM in~\cite{example_pjm}
and the hourly mileage multiplier $a^{\rm mil}_{t}$ calculated using historical RegD signals of the same hour in the past 14 days. 
The EVA solved the bidding problem starting at 17:59 and updated its bids and Lagrange multipliers with a default period of 30 min.

\begin{figure}[!t]
  \centering
  \includegraphics[width=3.5in]{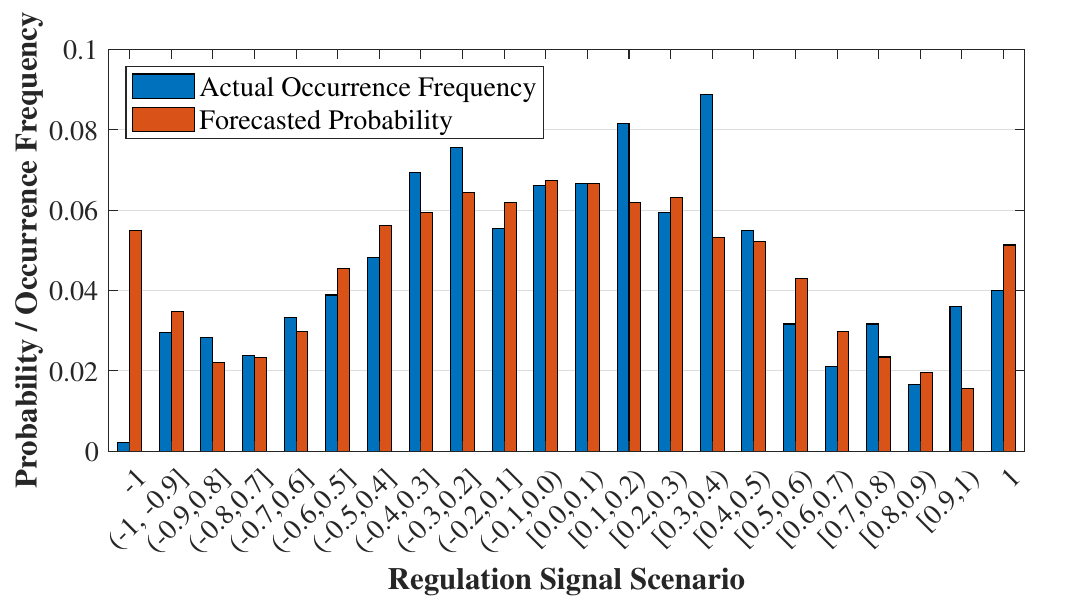}
  \caption{Distribution of regulation signal scenarios at 6:00-7:00.}
  \label{fig_scenario}
\end{figure}

In the simulation of regulation deployment, historical RegD signal data from PJM starting from 19.07.2020 18:00 was used, as this is the latest available data corresponding to the day type of 17.07.2022 (Sunday).
To estimate the probabilities and mileage multipliers for regulation signals at each time interval, we used the average distribution of RegD signals and the average mileage multipliers over the 14 days before the day of scheduling for the corresponding time intervals.
The default granularity of regulation signal scenarios was 0.1, and the interval median was used as the signal value of the corresponding scenario, e.g., $\delta_{\hat{s}} = 0.95$ for $\hat{s} = [0.9, 1)$. The forecasted and actual distributions of the regulation signal scenarios at 6:00-7:00 are shown in Fig.~\ref{fig_scenario} as a demonstration.

\subsection{Results and Comparison}
The methods being compared are combinations of different bidding strategies and power allocation strategies. Apart from the proposed method, the other three methods use the same bidding strategy as~\cite{vagropoulos_optimal_2013}, but their power allocation strategies in regulation deployment are different. For the sake of presentation, we name the compared methods as follows:
\paragraph*{\textbf{The proposed method}} Power allocation is embedded in the bidding model (\ref{optimal_bid_mid}), and the Lagrange-multiplier-based power allocation model (\ref{optimal_allocation}) decides the power of the EVs in regulation deployment.
\paragraph*{\textbf{Proportional Allocation}} In the bidding model, the EVA regulation capacity is allocated to the EVs, and the adjusted power of the EVs in regulation deployment is set proportional to the allocated capacity~\cite{vagropoulos_optimal_2013}.
\paragraph*{\textbf{Heuristic Weight}} Bidding as in~\cite{vagropoulos_optimal_2013}; degradation cost and charging priority are considered in power allocation. This is equivalent to replacing the Lagrange multipliers in (\ref{cons_cost2}) with the heuristic weights proposed in~\cite{vagropoulos_real-time_2016}. The recommended parameters are used, i.e., $(x_1, x_2, y, z) = (1, 1, 4, 5)$.
\paragraph*{\textbf{Minimum Degradation}} Bidding as in~\cite{vagropoulos_optimal_2013}; power is preferentially allocated to EVs with lower degradation factors. This is equivalent to setting the Lagrange multipliers in (\ref{cons_cost2}) to be $0$.

\begin{figure}[!t]
  \centering
  \includegraphics[width=3.5in]{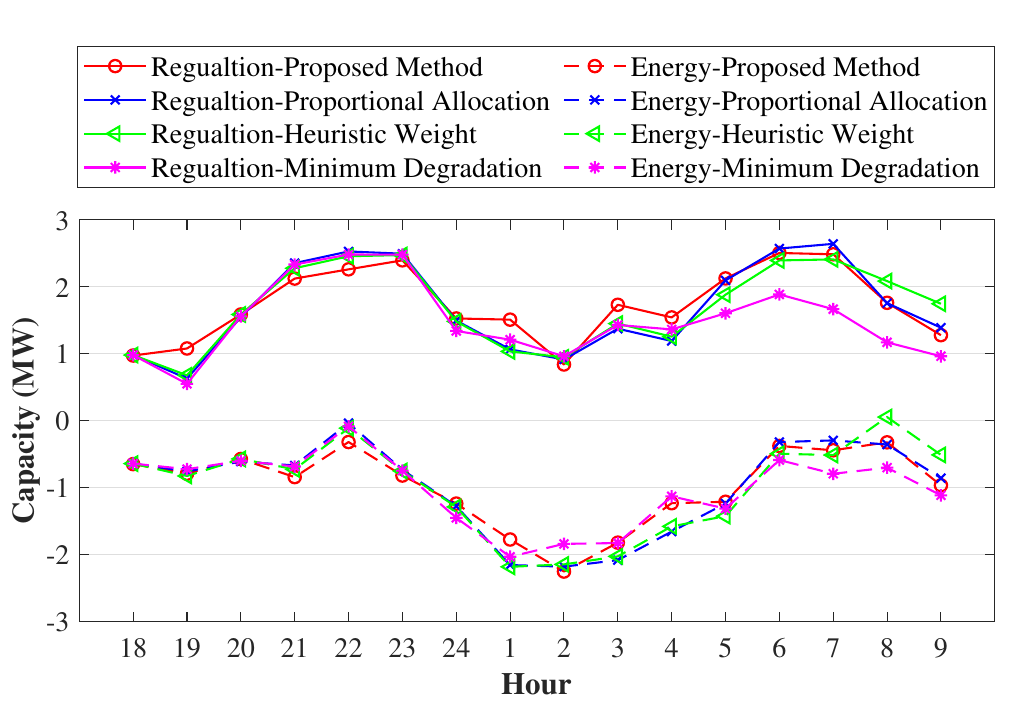}
  \caption{Final bids of the EVA in the electricity market.}
  \label{fig_bids}
\end{figure}

\begin{figure*}[!t]
  \centering
  \subfloat[]{
    \includegraphics[width=1.75in]{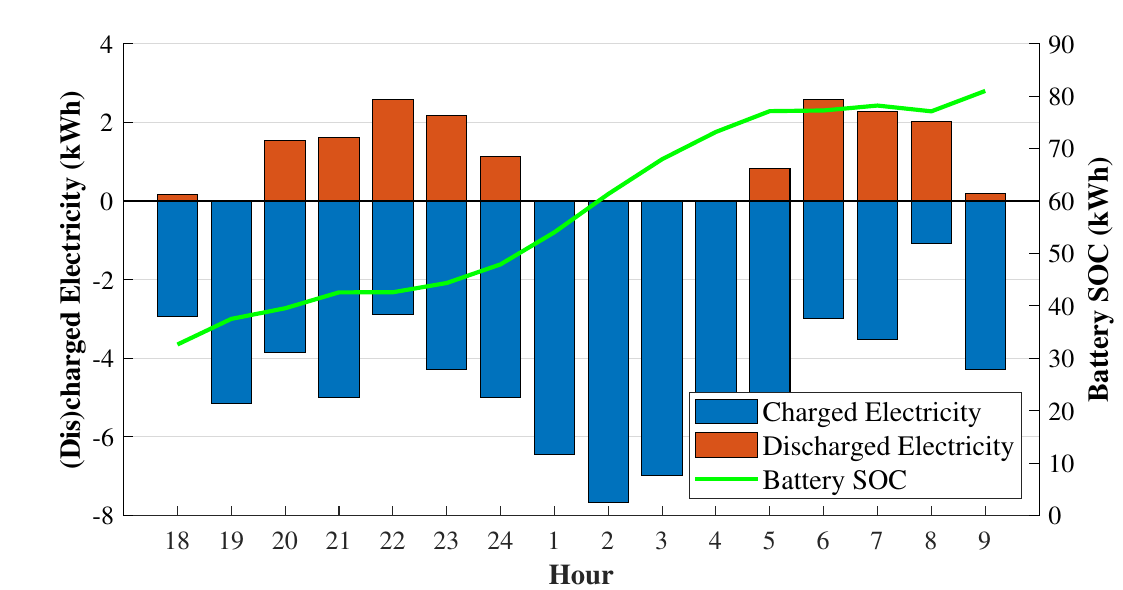}}
  \subfloat[]{
    \includegraphics[width=1.75in]{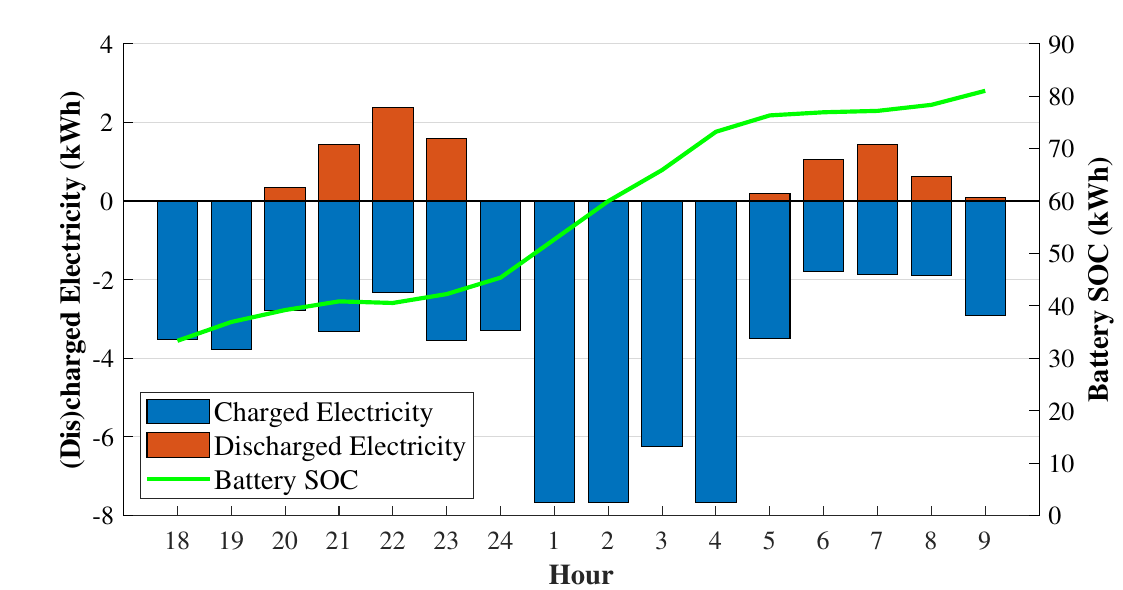}}
  \subfloat[]{
    \includegraphics[width=1.75in]{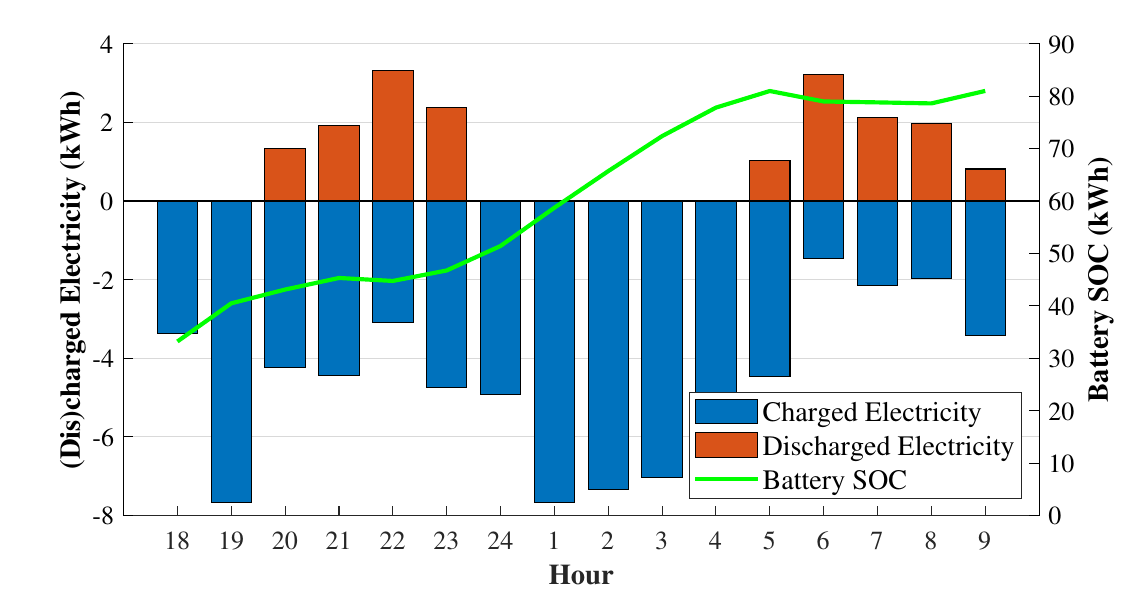}}
  \subfloat[]{
    \includegraphics[width=1.75in]{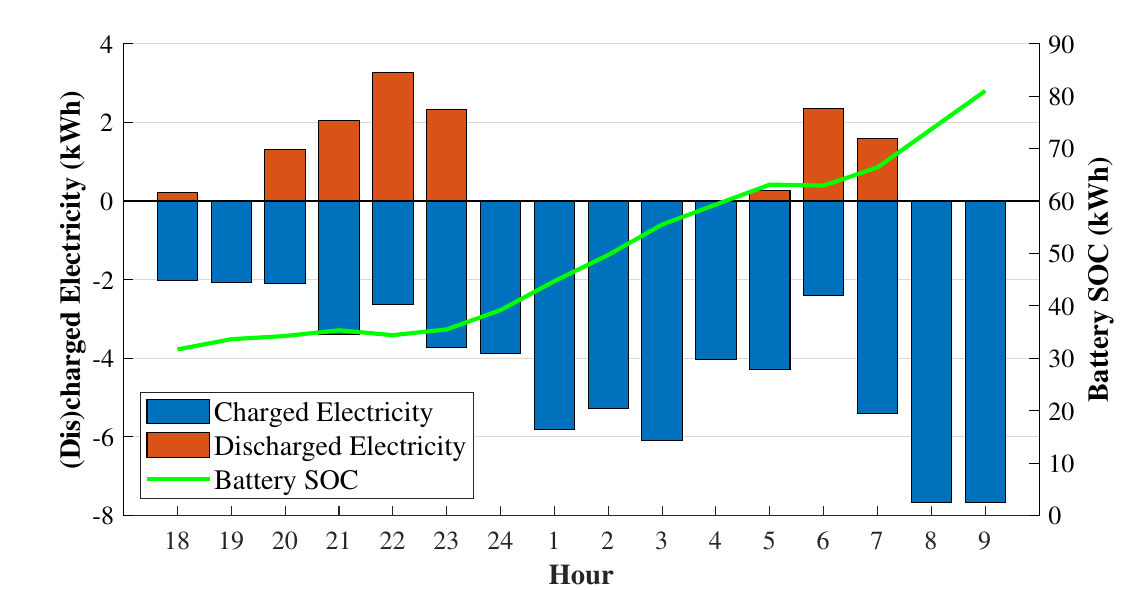}}
  \caption{Exchanged electricity with the grid and energy Curve of EV211: (a) The proposed method. (b) Proportional allocation. (c) Heuristic weight. (d) Minimum degradation.}
  \label{fig_typical_ev}
\end{figure*}

\begin{table}[!t]
  \caption{Comparison of EVA Profits Using Different Methods}
  \label{tab_profit_method}
  \centering
  \begin{tabular}{cccc}
    \toprule
    \multicolumn{1}{c}{\multirow{2}{*}{\begin{tabular}[c]{@{}c@{}}Bidding-Allocation \\      Method\end{tabular}}} &
    \multicolumn{1}{c}{\begin{tabular}[c]{@{}c@{}}Income from\\      the Market\end{tabular}}                  &
    \multicolumn{1}{c}{\begin{tabular}[c]{@{}c@{}}Degradation\\      Cost\end{tabular}}                  &
    \multicolumn{1}{c}{\begin{tabular}[c]{@{}c@{}}EVA \\      Profit\end{tabular}}                                                                                                   \\
    \multicolumn{1}{c}{}                                            & \multicolumn{1}{c}{(\$)} & \multicolumn{1}{c}{(\$)} & \multicolumn{1}{c}{(\$)} \\ \midrule
    \multirow{2}{*}{\textbf{Proposed Method}}                       & \textbf{1602}            & \textbf{319}             & \textbf{1283}            \\
                                                                    & \textbf{(-1.5\%)}        & \textbf{(-39.7\%)}       & \textbf{(+16.9\%)}       \\ \cline{2-4}
    \multirow{2}{*}{Proportional Allocation}                        & 1626                     & 528                      & 1098                     \\
                                                                    & 0\%                      & 0\%                      & 0\%                      \\\cline{2-4}
    \multirow{2}{*}{Heuristic Weight}                               & 1625                     & 444                      & 1181                     \\
                                                                    & (-0.05\%)                 & (-16.0\%)                & (+7.6\%)                 \\\cline{2-4}
    \multirow{2}{*}{Minimum Degradation}                            & 1200                     & 442                      & 759                      \\
                                                                    & (-26.2\%)                & (-16.4\%)                & (-30.9\%)                \\ \bottomrule
  \end{tabular}
\end{table}

\textbf{EVA Bids and Profits.} Fig.~\ref{fig_bids} and Table~\ref{tab_profit_method} show the final bids and profits of the EVA using each method.
Clearly, different power allocation strategies can lead to differing bids.
As shown in Table~\ref{tab_profit_method}, regarding the method of proportional allocation as the baseline, using the proposed method reduced the degradation cost of EV batteries by 39.7\%, thus increasing profit by 16.9\%.
Using the heuristic weight method had less impact. 
Although using the minimum degradation method reduced the degradation cost by 16.4\%, the corresponding profit decreased (-30.9\%) due to a greater decline in the market income (-26.2\%). It is worth noting that although the minimum degradation method attempts to minimize the immediate degradation cost upon receiving each regulation signal, the total degradation cost cannot be guaranteed to be the lowest because the long-term cost for deploying regulation is not considered. We will analyzed the in-depth reason in the discussion section.

\textbf{Energy Curve of a Typical EV.} Figs.~\ref{fig_typical_ev}(a-d) show the amount of electricity exchanged with the grid and the SOC at each time interval of a typical EV (No. 211, Type a, arrived at 18:00 on the 17th and left at 10:00 on the 18th).
Comparing Fig.~\ref{fig_typical_ev} (a) and (b), the discharged electricity of EV211 using the proposed method is much larger than that using the proportional allocation method, especially at hours 5 to 8.
Since the difference in the bids using the two methods is smaller (as shown in Fig.\ref{fig_bids} above), this indicates that EVs with lower degradation factors are allocated more discharging power using the proposed method.
In Fig.~\ref{fig_typical_ev}(d), since the minimum degradation method only considers the degradation cost at the current time, EV211 discharges a large amount of electricity at hours 6 and 7.
In the following hours 8 and 9, the EV keeps charging at the maximum power to meet the required SOC before leaving and therefore fails to provide more regulation capacity, resulting in a decline in the EVA income.
The heuristic weight method (Fig.~\ref{fig_typical_ev}(c)) partially reconciles the aim of reducing degradation costs and increasing future income.

\begin{figure}[!t]
  \centering
  \includegraphics[width=3.5in]{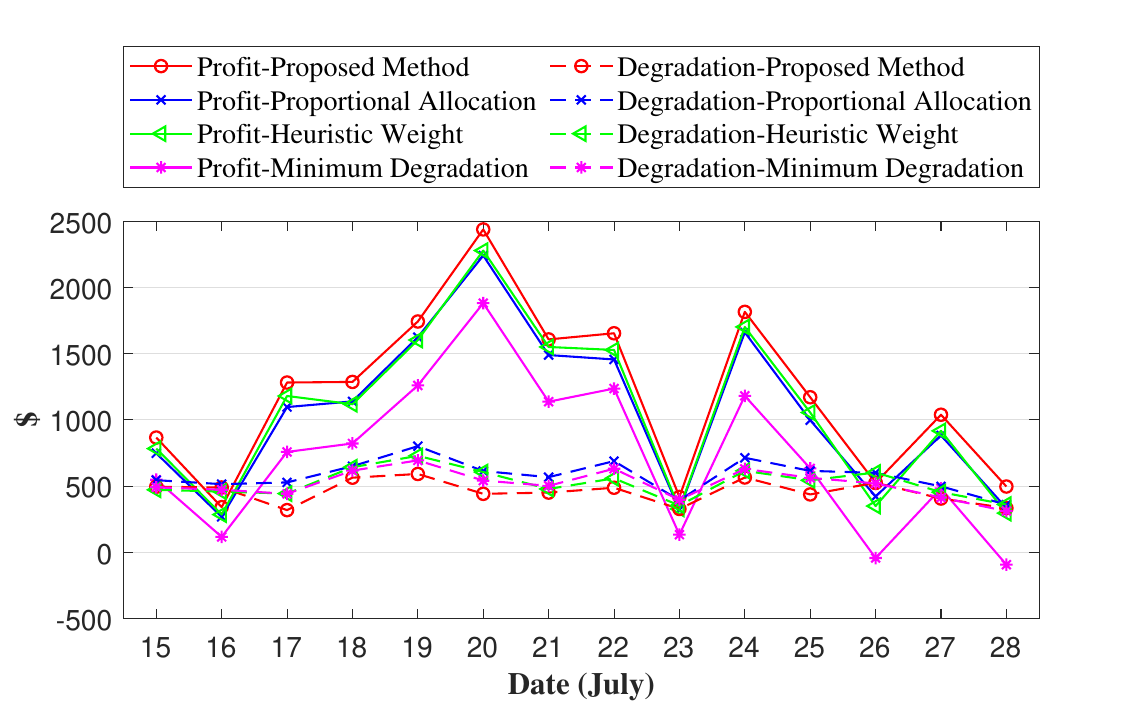}
  \caption{EVA profits and battery degradation costs of July 15-28.}
  \label{fig_wrt_day}
\end{figure}

\textbf{Profits across the Days.} Fig.~\ref{fig_wrt_day} shows the EVA profits and battery degradation costs using different methods, with the prices of July 15-28 and the corresponding RegD signals of July 17-30, respectively. Note that the profits can be negative because the EVA needs to purchase electricity to charge the EVs, and the payments for charging are regarded as constants and therefore not displayed. It can be seen from Fig.~\ref{fig_wrt_day} that due to fluctuations in market prices, the profits of the EVA on different dates fluctuate greatly, and the proposed method can always exceed other methods in profit. Compared to the Proportional Allocation method (blue), the increase ratio of profits using the proposed method (red) ranges between 7.3\% and 48.5\%, and the total profit of the 14 days is increased by 13.6\%.

\subsection{Sensitivity Analysis}
The degradation factor of EVs, the period for updating bids and Lagrangian multipliers, and the granularity of regulation signal scenarios may affect the EVA profit. We investigated the impact of these parameters on the performance of the proposed method.

\begin{figure}[!t]
  \centering
  \includegraphics[width=3.5in]{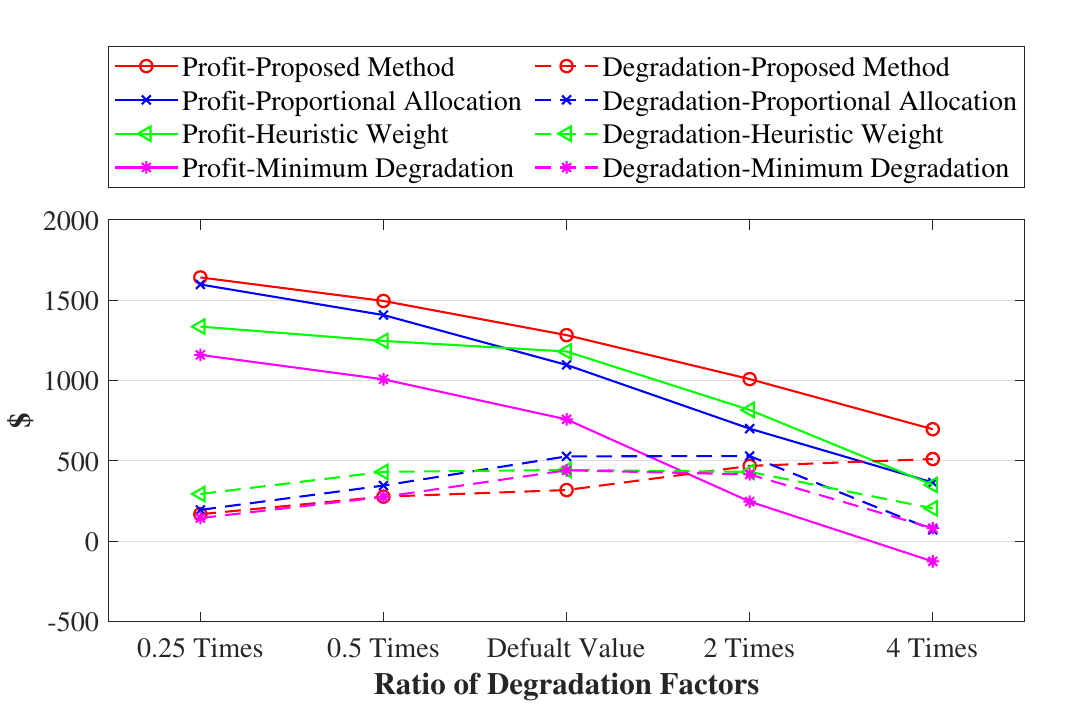}
  \caption{EVA profits and battery degradation costs with different degradation factors.}
  \label{fig_wrt_deg}
\end{figure}

\textbf{Impact of Degradation Factor.} Fig.~\ref{fig_wrt_deg} shows the EVA profits and degradation costs when the degradation factors of EV batteries are set to 0.25-4 times the default value, respectively.
As the degradation factors increase, the EVA profit decreases using whichever method, and the profit using the proposed method always exceeds other methods.
For example, compared with the Proportional Allocation method (blue), when the degradation factors are 1, 2, and 4 times the default value, the EVA profits using the proposed method (red) are 16.7\%, 44.0\%, and 90.8\% higher, respectively.
It should be noted that the EVA aims to maximize its profit, and with certain parameter values, the proposed method can lead to higher degradation costs, i.e., more compensation for EV owners, which is not inconsistent with the goal of the EVA.

\begin{table}[!t]
  \caption{Impact of Different Update Periods}
  \label{tab_profit_period}
  \centering
  \begin{tabular}{cccc}
    \toprule
    \begin{tabular}[c]{@{}c@{}}Update   Period\\      (min) \\ {} \end{tabular} &
    \begin{tabular}[c]{@{}c@{}}Income from\\      the Market\\      (\$)\end{tabular} &
    \begin{tabular}[c]{@{}c@{}}Degradation\\      Cost\\      (\$)\end{tabular} &
    \begin{tabular}[c]{@{}c@{}}EVA \\      Profit\\      (\$)\end{tabular}                     \\ \midrule
    60                         & 1584 & 311 & 1273 \\
    30                         & 1606 & 319 & 1287 \\
    15                         & 1609 & 330 & 1279 \\
    5                          & 1608 & 336 & 1272 \\ \bottomrule
  \end{tabular}
\end{table}

\textbf{Impact of Update Period.} Table~\ref{tab_profit_period} shows the EVA profits and degradation costs using the proposed method with different update periods for bids and Lagrangian multipliers. Theorem~\ref{theorem_invariance} indicates that there is no need to update the Lagrangian multiplier within a time interval (60 min), which is based on the premise that the occurrence frequency of regulation signals is consistent with the forecast and the premise does not hold. Intuitively, by appropriately shortening the update period (e.g., from 60 min to 30 min), the latest information can be better utilized and the impact of uncertainty on profits can be reduced, which is verified by the results in Table~\ref{tab_profit_period}. However, Table also shows that with the update period further shortened, the EVA profit has not increased significantly, or even decreased. This may be caused by the drop in the forecast accuracy of the regulation signal in a shorter period.

\textbf{Impact of Regulation Signal Scenario Granularity.}  As mentioned in Section~\ref{sec_framework_overview}, adopting a smaller scenario granularity in the bidding model means that the discretized regulation signal scenarios will differ less from the actual continuous signals, but the number of scenarios will also increase, resulting in a larger scale and longer solution time for the bidding problem.
Table~\ref{tab_profit_granularity} shows the EVA profits and solution times using the proposed method with different scenario granularities of regulation signals.
Reducing the scenario granularity from 0.2 to 0.05 (adopting [0, 0.2),...,[0.8, 1) and [0, 0.05),...,[0.95, 1) as the discretized regulation signal scenarios, respectively), the number of scenarios increases by 4 times and the EVA profits increased slightly. 
However, further reducing the scenario granularity did not increase profits. This indicates that Assumption~\ref{assumption_express} is reasonable with a fine scenario granularity.
With smaller scenario granularity, the solution time (average time over 100 times of solving at 18:00) of the bidding problem was longer, while the solution time of the power allocation problem remains no more than 0.20 s.
The results verify that the proposed framework is favorable for bidding in the real-time market and online power allocation for regulation deployment.

\begin{table}[!t]
  \caption{Impact of Different Regulation Signal Scenario Granularities}
  \label{tab_profit_granularity}
  \centering
  \begin{tabular}{cccc}
    \toprule
    \begin{tabular}[c]{@{}c@{}}Scenario   \\      Granularity \\ {} \end{tabular} &
    \begin{tabular}[c]{@{}c@{}}EVA \\      Profit\\      (\$)\end{tabular} &
    \begin{tabular}[c]{@{}c@{}}Solution Time - \\      Bidding Problem\\      (s)\end{tabular} &
    \begin{tabular}[c]{@{}c@{}}Solution Time - \\      Power Allocation \\      Problem (s)\end{tabular}                       \\ \midrule
    0.2                        & 1284 & 1.63  & 0.15 \\
    0.1                        & 1287 & 3.52  & 0.17 \\
    0.05                       & 1293 & 8.80  & 0.14 \\
    0.025                      & 1288 & 28.10 & 0.14 \\
    0.01                       & 1287 & 81.21 & 0.16 \\ \bottomrule
  \end{tabular}
\end{table}

\subsection{Discussion}
The above case study verifies the effect of jointly optimizing the bidding and power allocation of an EVA on improving its profits. The advantages of the proposed method derive essentially from the reasonable consideration of the coupling of bidding and power allocation. Here we discuss some specific superiorities of the proposed method over other approaches:

First, it exploits the heterogeneity among different EVs. These differences include not only different arrival and departure time and the electricity to be charged by the EVs, but also the different parameters such as (dis)charging efficiency and degradation cost of EV batteries. Intuitively, the proposed method can give priority to calling resources with low cost. The existing methods, especially the commonly used proportional allocation methods, fail to take advantage of the heterogeneity.

Second, it achieves the correspondence between the bidding model and the power allocation model. In other words, the consideration of power allocation in bidding is consistent with the actual power allocation method in regulation deployment. In the minimum degradation and heuristic weight methods, proportional allocation is used as an assumption in the bidding problem, while the actual power allocation is based on degradation cost/heuristic weights. The mismatch leads to logical inconsistency and loss of EVA profits.

Third, it balances the short-term and long-term profits. The minimum degradation optimizes the short-term profits without considering long-term impacts and is a greedy algorithm that deviates from the optimal. In the proposed method, the impact of power allocation on the future profit is reflected through Lagrange multipliers, so that theoretical optimality can be realized.

\section{Conclusion}\label{sec_conclusion}
An EVA's bidding in the regulation market is coupled with its power allocation strategy for regulation deployment.
Considering this coupling improves the EVA profit.
In this paper, we propose a framework for an EVA to co-optimize its bidding and power allocation in the regulation market.
Power allocation in different regulation signal scenarios is embedded in the bidding model; thus, the costs for providing regulation are simultaneously optimized.
During regulation deployment, the small-scale power allocation model implemented online determines the EV (dis)charging power, where the Lagrange multipliers of the bidding problem enable consideration of the impact on future EVA profits and ensure that the allocation results are optimal for the profit of EVA bidding.
The simulation results verify that co-optimizing bidding and power allocation enables both utilizing the heterogeneous characteristics of the EVs and balancing the short/long-term costs, thus improving the profit of the EVA while reducing the degradation costs.

Although this paper focuses on aggregating EVs, the proposed framework can be generalized for the participation of energy storage systems~\cite{liu_optimal_2018}, heating, ventilation, and air-conditioning (HVAC) systems~\cite{liu_optimal_2021}, etc. in the regulation market.
In these scenarios, the heterogeneous characteristics of different resources need to be considered, and short-term and long-term costs need to be reconciled.

The profit distribution between the EVA and EVs is also coupled with bidding and power allocation. This coupling, along with the coupling between the transportation and the electricity network is for future research.

\appendices
\section{Proof of Theorem\ref{theorem_optimality}}\label{theorem_optimality_app}

\begin{proof}
  First, consider the slackness of constraints (\ref{cons_energyLimit2}) and (\ref{cons_energyLimit3}).
  If (\ref{cons_energyLimit2}) is tight, since $\Delta \hat{t}$ is very small, this indicates that $E^{\rm cur}_{i}$ is at the minimum or maximum level at $\hat{t}^{\rm cur}$.
  Substituting (\ref{cons_energyInit}) into (\ref{cons_energyChange_mid}), we have $E_{t^{\rm cur}, i} = E^{\rm min}_{t^{\rm cur}, i}$ or $E_{t^{\rm cur}, i} = E^{\rm max}_{t^{\rm cur}, i}$, meaning that the bidding problem is not strictly feasible for the constraint of (\ref{cons_energyLimit}), which contradicts Assumption~\ref{assumption_feasible}.

  If (\ref{cons_energyLimit3}) is tight, EV $i$ must be charged at maximum power until it leaves, which is the same for both problems, and therefore, the power allocated to EV $i$ corresponds to the optimal solutions to the two problems.
  Below, we only consider the case where constraints (\ref{cons_energyLimit2}) and (\ref{cons_energyLimit3}) are slack.

  Following convention, we rewrite the bidding problem in the minimizing form (i.e., min. $- Profit(\hat{t}^{\rm cur})$). The parts regarding $t^{\rm cur}$ and $\hat{s}$ in the KKT conditions of the bidding problem are:

  \textbf{a.} Primal constraints: (\ref{cons_balance}) and (\ref{cons_powerLimit}).

  \textbf{b.} Dual constraints: ($\forall i \in I, k \in K$)
  \begin{equation}\label{dual_cons1}
    \underline{\mu^{\rm dis(ch)}_{t^{\rm cur}, \hat{s}, i, k}} \ge 0, \overline{\mu^{\rm dis(ch)}_{t^{\rm cur}, \hat{s}, i, k}} \ge 0
  \end{equation}
  \textbf{\quad c.} Complementary slackness: ($\forall i \in I, k \in K$)
  \begin{flalign}\label{dual_cons2_1}
    & \ \underline{\mu^{\rm dis(ch)}_{t^{\rm cur}, \hat{s}, i, k}} \cdot (- P^{\rm dis(ch)}_{t^{\rm cur}, \hat{s}, i, k}) = 0,  
    &\\ \label{dual_cons2_2}
    & \ \overline{\mu^{\rm dis(ch)}_{t^{\rm cur}, \hat{s}, i, k}} \cdot (P^{\rm dis(ch)}_{t^{\rm cur}, \hat{s}, i, k}
      - u_{t^{\rm cur}, i} \cdot \overline{P^{\rm dis(ch)}_{i, k}}) = 0, 
    &
  \end{flalign}
  \textbf{\quad d.} $\partial L^{\rm B} / \partial x_{t^{\rm cur}, \hat{s}, i, k} = 0$, $\forall i \in I, k \in K$, where $L^{\rm B}$ is the Lagrange multiplier of the bidding problem, $x_{t^{\rm cur}, \hat{s}, i, k} = (P^{\rm dis}_{t^{\rm cur}, \hat{s}, i, k}, P^{\rm ch}_{t^{\rm cur}, \hat{s}, i, k})$ denotes the variables regarding $t^{\rm cur}$ and $\hat{s}$, and $\partial / \partial$ denotes the partial derivative. In $\partial L^{\rm B} / \partial x_{t^{\rm cur}, \hat{s}, i, k}$:
  The corresponding part of the objective function is $  \pi_{t^{\rm cur}, \hat{s}}  Pr^{\rm deg}_{i, k}  (1, 0)^\top \Delta t^{\rm cur}$ (note that $\underset{s \in S}{\Sigma}\pi_s = 1$).
  The constraints (\ref{cons_reg1}), (\ref{cons_reg2}), (\ref{cons_energyInit}) and (\ref{cons_energyLimit}) do not appear.

  The partial of constraint (\ref{cons_balance}) is $\lambda^{\rm bal}_{t^{\rm cur}, \hat{s}} (- 1, 1)^\top$.
  The partial of constraint (\ref{cons_powerLimit}) is $(- \underline{\mu^{\rm dis}_{t^{\rm cur}, \hat{s}, i, k}}, - \underline{\mu^{\rm ch}_{t^{\rm cur}, \hat{s}, i, k}})^\top$,
  $(\overline{\mu^{\rm dis}_{t^{\rm cur}, \hat{s}, i, k}}, \overline{\mu^{\rm ch}_{t^ {\rm cur}, \hat{s}, i, k}})^\top$.
  The partial of constraint 
(\ref{cons_energyChange_mid}), i.e., the modified constraint (\ref{cons_energyChange}), is
  $ \pi_{t^{\rm cur}, \hat{s}} (- \lambda^{\rm E}_{t^{\rm cur}, i} / \eta^{\rm dis}_i, \lambda^{\rm E}_{t^{\rm cur}, i} \cdot \eta^{\rm ch}_i)^\top \Delta t^{\rm cur}$. Therefore, the condition $\partial L^{\rm B} / \partial x_{t^{\rm cur}, \hat{s}, i, k} = 0$ can be expressed as: ($\forall i \in I, k \in K$)
  \begin{equation}\label{dual_cons3}
    \begin{aligned}
       &   \pi_{t^{\rm cur}, \hat{s}}  Pr^{\rm deg}_{i, k} (1, 0)^\top \Delta t^{\rm cur} +
      \lambda^{\rm bal}_{t^{\rm cur}, \hat{s}}  (- 1, 1)^\top                                                                                                              \\
       & - (\underline{\mu^{\rm dis}_{t^{\rm cur}, \hat{s}, i, k}}, \underline{\mu^{\rm ch}_{t^{\rm cur}, \hat{s}, i, k}})^\top
      + (\overline{\mu^{\rm dis}_{t^{\rm cur}, \hat{s}, i, k}}, \overline{\mu^{\rm ch}_{t^{\rm cur}, \hat{s}, i, k}})^\top                                                      \\
       & +  \pi_{t^{\rm cur}, \hat{s}}  (- \lambda^{\rm E}_{t^{\rm cur}, i} / \eta^{\rm dis}_i, \lambda^{\rm E}_{t^{\rm cur}, i} \cdot \eta^{\rm ch}_i)^\top \Delta t^{\rm cur} 
      = 0,                                        
       & 
    \end{aligned}
  \end{equation}
  \quad The KKT conditions for the power allocation problem are:
  \textbf{a.} Primal constraints: (\ref{cons_balance2}) and (\ref{cons_powerLimit2}). According to Assumption~\ref{assumption_forecast}, the deviated power is treated as $0$, and the corresponding constraint of (\ref{cons_powerLimit_unbal}) is not considered. The constraints of (\ref{cons_energyLimit2}) and (\ref{cons_energyLimit3}) are slack and not considered, as mentioned before.

  \textbf{b.} Dual constraints: ($\forall i \in I, k \in K$)
  \begin{equation}
    \underline{\mu^{\rm dis(ch)}_{\hat{t}^{\rm cur}, \hat{s}, i, k}} \ge 0, \overline{\mu^{\rm dis(ch)}_{\hat{t}^{\rm cur}, \hat{s}, i, k}} \ge 0
  \end{equation}
  \textbf{\quad c.} Complementary slackness: ($\forall i \in I, k \in K$)
  \begin{flalign}\label{dual_cons4_1}
    & \ \underline{\mu^{\rm dis(ch)}_{\hat{t}^{\rm cur}, \hat{s}, i, k}} \cdot (- P^{\rm dis(ch)}_{\hat{t}^{\rm cur}, \hat{s}, i, k}) = 0
  & \\ \label{dual_cons4_2}
  & \    \overline{\mu^{\rm dis(ch)}_{\hat{t}^{\rm cur}, \hat{s}, i, k}} \cdot (P^{\rm dis(ch)}_{\hat{t}^{\rm cur}, \hat{s}, i, k}
      - u_{t^{\rm cur}, i} \cdot \overline{P^{\rm dis(ch)}_{i, k}}) = 0
      &
  \end{flalign}
  \textbf{\quad d.} $\partial L^{\rm A} / \partial x_{\hat{t}^{\rm cur}, \hat{s}, i, k} = 0$, $\forall i \in I, k \in K$, where $L^{\rm A}$ is the Lagrange of the power allocation problem and $x_{\hat{t}^{\rm cur}, \hat{s}, i, k} = (P^{\rm dis}_{\hat{t}^{\rm cur}, \hat{s}, i, k}, P^{\rm ch}_{\hat{t}^{\rm cur}, \hat{s}, i, k})$ denotes the variables. In $\partial L^{\rm A} / \partial x_{\hat{t}^{\rm cur}, \hat{s}, i, k}$:
  The corresponding part of the objective function is (substitute (\ref{cons_cost2}) into $Cost_{\hat{t}^{\rm cur}, \hat{s}}$): $ Pr^{\rm deg}_{i, k} (1, 0)^\top \Delta \hat{t} +  (- \lambda^{\rm E}_{t^{\rm cur}, i} / \eta^{\rm dis}_i, \lambda^{\rm E}_{t^{\rm cur}, i} \cdot \eta^{\rm ch}_i)^\top \Delta \hat{t}$.
  The partial of constraint (\ref{cons_balance2}) is $\lambda^{\rm bal}_{\hat{t}^{\rm cur}, \hat{s}} (- 1, 1)^\top$.
  The partial of constraint (\ref{cons_powerLimit2}) is $- (\underline{\mu^{\rm dis}_{\hat{t}^{\rm cur}, \hat{s}, i, k}}, \underline{ \mu^{\rm ch}_{\hat{t}^{\rm cur}, \hat{s}, i, k}})^\top$, $(\overline{\mu^{\rm dis}_{ \hat{t}^{\rm cur}, \hat{s}, i, k}}, \overline{\mu^{\rm ch}_{\hat{t}^{\rm cur}, \hat{s}, i, k }})^\top$. Therefore, the condition $\partial L^{\rm A} / \partial x_{\hat{t}^{\rm cur}, \hat{s}, i, k} = 0$ can be expressed as: ($\forall i \in I, k \in K$)
  \begin{equation}
    \begin{aligned}
       &  Pr^{\rm deg}_{i, k} (1, 0)^\top \Delta \hat{t}
      +   (- \lambda^{\rm E}_{t^{\rm cur}, i} / \eta^{\rm dis}_i, \lambda^{\rm E}_{t^{\rm cur}, i} \cdot \eta^{\rm ch}_i)^\top \Delta \hat{t}\\
       & + \lambda^{\rm bal}_{\hat{t}^{\rm cur}, \hat{s}} (- 1, 1)^\top
      - (\underline{\mu^{\rm dis}_{\hat{t}^{\rm cur}, \hat{s}, i, k}}, \underline{\mu^{\rm ch}_{\hat{t}^{\rm cur}, \hat{s}, i, k}})^\top     \\
       & +  (\overline{\mu^{\rm dis}_{\hat{t}^{\rm cur}, \hat{s}, i, k}}, \overline{\mu^{\rm ch}_{\hat{t}^{\rm cur}, \hat{s}, i, k}})^\top
    \end{aligned}
  \end{equation}
  We now verify that the optimal solution to the bidding problem satisfies the KKT conditions for the power allocation problem. According to Assumption~\ref{assumption_feasible}, the two problems are strictly feasible. Therefore, in the bidding problem, for the optimal power allocated to EVs regarding $t^{\rm cur}$ and $\hat{s}$, i.e., $\{P^{\rm *dis(ch)}_{t^{\rm cur}, \hat{s}, i, k} | i \in I, k \in K\}$,
  $\exists \ \lambda^{\rm *bal}_{t^{\rm cur}, \hat{s}}$,
  $\underline{\mu^{\rm *dis(ch)}_{t^{\rm cur}, \hat{s}, i, k}}$,
  $\overline{\mu^{\rm *dis(ch)}_{t^{\rm cur}, \hat{s}, i, k}}$, and
  $\lambda^{\rm *E}_{t + 1, i}$ that satisfy the KKT conditions (a-d). According to Assumption~\ref{assumption_express}, we have $\hat{s} \in \hat{S}$ and $\pi_{t^{\rm cur}, \hat{s}} > 0$, and therefore $\Delta \hat{t} / (\Delta t^{\rm cur} \pi_{t^{\rm cur}, \hat{s}}) > 0$. $\forall i \in I, k \in K$, let:
  \begin{flalign}\nonumber
    & P^{\rm dis(ch)}_{\hat{t}^{\rm cur}, \hat{s}, i, k} = P^{\rm *dis(ch)}_{t^{\rm cur}, \hat{s}, i, k};
    \ \lambda^{\rm bal}_{\hat{t}^{\rm cur}, \hat{s}} = \Delta \hat{t} / (\Delta t^{\rm cur} \pi_{t^{\rm cur}, \hat{s}})
    \lambda^{\rm *bal}_{t^{\rm cur}, \hat{s}};
    & \\ \nonumber
    & (\underline{\mu^{\rm dis(ch)}_{\hat {t}^{\rm cur}, \hat{s}, i, k}}, \overline{\mu^{\rm dis(ch)}_{\hat{t}^{\rm cur}, \hat{s}, i, k}}) = \frac{\Delta \hat{t}}{\Delta t^{\rm cur} \pi_{t^{\rm cur}, \hat{s}}} (\underline{\mu^{\rm *dis(ch)}_{t^{\rm cur}, \hat{s}, i, k}}, \overline{\mu^{\rm *dis(ch)}_{t^{\rm cur}, \hat{s}, i, k}})
  \end{flalign}
  and substituting $\lambda^{\rm *E}_{t^{\rm cur}, i}$ in, it can be verified that the KKT conditions (a-d) of the power allocation problem are satisfied. Since the two problems are both convex optimizations (linear programming), the optimal solution to the power allocation problem corresponds to the bidding problem.

\end{proof}

\section{Proof of Theorem\ref{theorem_invariance}}\label{theorem_invariance_app}

\begin{proof}
  Denote the optimal solution of ${\rm max.} \ Profit(\hat{t}^{\rm cur})$ as $\{P^{\rm *dis(ch)}_{t, \hat{s}, i, k} | t \in T, \hat{s} \in \hat{S}, i \in I, k \in K\}$, $\{P^{*}_{t}, R^{*}_{t} | t \in T\}$, and $\{E^{*}_{t, i} | t \in T \cup \{t^{\rm end} + 1\}, i \in I\}$.
  Except for the constraints for the initial SOC (\ref{cons_energyInit}) and the SOC change at time interval $t^{\rm cur}$ (\ref{cons_energyChange_mid}), the KKT conditions of ${\rm max.} \ Profit(\hat{t}^{\rm cur})$ naturally satisfy the KKT conditions of ${\rm max.} \ Profit(\hat{t}'^{\rm cur})$ because their parameters and forms are identical. Only KKT conditions related to constraints (\ref{cons_energyInit}) and (\ref{cons_energyChange_mid}) need to be discussed:

  \textbf{a.} Primal constraints. Following regulation signals according to $\{P^{\rm *dis(ch)}_{t^{\rm cur}, \hat{s}, i, k} | i \in I, k \in K \hat{s} \in \hat{S}\}$ from $\hat{t}^{\rm cur}$ to $\hat{t}'^{\rm cur}$, the SOC of EV $i$ at $\hat{t}'^{\rm cur}$ is given by:
  \begin{equation}\label{cons_energyChange_mid2} 
    \begin{aligned}
      E'^{\rm cur}_{i} = & E^{\rm cur}_{i} +  \underset{\hat{s} \in \hat{S}}{\Sigma}\pi_{t^{\rm cur}, \hat{s}} \underset{k \in K}{\Sigma} (P^{\rm *ch}_{t^{\rm cur}, \hat{s}, i, k}\cdot \eta^{\rm ch}_i                                 \\
                         &  - P^{\rm *dis}_{t^{\rm cur}, \hat{s}, i, k} / \eta^{\rm dis}_i)\cdot(\hat{t}'^{\rm cur} - \hat{t}^{\rm cur}),  \ \forall i \in I,
    \end{aligned}
  \end{equation}
We substitute (\ref{cons_energyInit}) and (\ref{cons_energyChange_mid2}) in, and let $P'^{\rm dis(ch)}_{t^{\rm cur}, \hat{s}, i, k} = P^{\rm *dis(ch)}_{t^{\rm cur}, \hat{s}, i, k}$. We have (\ref{cons_energyChange_mid}) of ${\rm max.} \ Profit(\hat{t}'^{\rm cur})$ expressed as (note that $\Delta t^{\rm cur} = \Delta t'^{\rm cur} + (\hat{t}'^{\rm cur} - \hat{t}^{\rm cur})$):
  \begin{equation}\label{cons_energyChange_mid4} 
    \begin{aligned}
       & E^{\rm cur}_{i} + \underset{\hat{s} \in \hat{S}}{\Sigma}\pi_{t^{\rm cur}, \hat{s}} \underset{k \in K}{\Sigma} (P^{\rm *ch}_{t^{\rm cur}, \hat{s}, i, k}\cdot \eta^{\rm ch}_i \\
       & - P^{\rm *dis}_{t^{\rm cur}, \hat{s}, i, k} / \eta^{\rm dis}_i) \cdot \Delta t^{\rm cur} - E_{t + 1, i} = 0,  \ \forall i \in I,
    \end{aligned}
  \end{equation}
  which is consistent with (\ref{cons_energyChange_mid}) of ${\rm max.} \ Profit(\hat{t}^{\rm cur})$ and therefore satisfied.

  \textbf{b.} Conditions related to Lagrange multipliers. For the optimal solution of ${\rm max.} \ Profit(\hat{t}^{\rm cur})$,
  $\exists \ \lambda^{\rm *bal}_{t^{\rm cur}, \hat{s}}$,
  $\underline{\mu^{\rm *dis(ch)}_{t^{\rm cur}, \hat{s}, i, k}}$,
  $\overline{\mu^{\rm *dis(ch)}_{t^{\rm cur}, \hat{s}, i, k}}$, and
  $\lambda^{\rm *E}_{t^{\rm cur}, i}$ that satisfy (\ref{dual_cons1}),(\ref{dual_cons2_1}),(\ref{dual_cons2_2}), and (\ref{dual_cons3}). $\forall i \in I, \hat{s} \in \hat{S}, k \in K$, let:
  \begin{flalign}\nonumber
    & \lambda'^{\rm E}_{t^{\rm cur}, i} = \lambda^{*E}_{t^{\rm cur}, i};
    \ \lambda'^{\rm bal}_{t^{\rm cur}, \hat{s}} = \lambda^{\rm *bal}_{t^{\rm cur}, \hat{s}};
    & \\ \nonumber
       & (\underline{\mu'^{dis(ch)}_{t^{\rm cur}, \hat{s}, i, k}}, \overline{\mu'^{dis(ch)}_{t^{\rm cur}, \hat{s}, i, k}}) = \Delta t'^{\rm cur} / \Delta t^{\rm cur} (\underline{\mu^{\rm *dis(ch)}_{t^{\rm cur}, \hat{s}, i, k}}, \overline{\mu^{\rm *dis(ch)}_{t^{\rm cur}, \hat{s}, i, k}})
  \end{flalign}
  Note that $\Delta t'^{\rm cur} / \Delta t^{\rm cur} > 0$. It is easy to verify that the corresponding KKT conditions of ${\rm max.} \ Profit(\hat{t}'^{\rm cur})$ are satisfied by substituting the above Lagrange multipliers and the optimal solution of ${\rm max.} \ Profit(\hat{t}^{\rm cur})$. Since both problems are convex optimizations, the optimal solution to ${\rm max.} \ Profit(\hat{t}^{\rm cur})$ is also optimal to ${\rm max.} \ Profit(\hat{t}'^{\rm cur})$.
\end{proof}




\ifCLASSOPTIONcaptionsoff
  \newpage
\fi


\bibliographystyle{IEEEtran}
\bibliography{reference}

\begin{IEEEbiography}[{\includegraphics[width=1in,height=1.25in,clip,keepaspectratio]{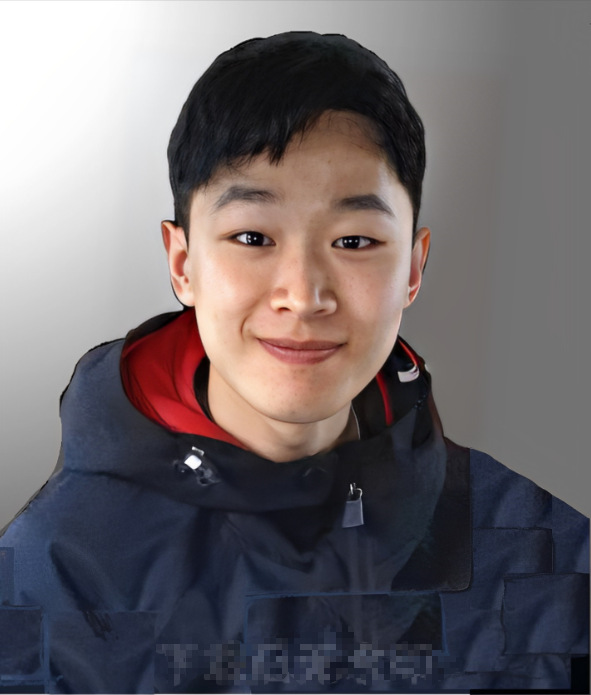}}]{Ruike Lyu}

  received the bachelor's degree in electrical engineering from Tsinghua University, Beijing, China, in 2021. 
  
  He is currently pursuing the Ph.D. degree. His research interests include demand response, electric vehicle, and electricity market.
  
\end{IEEEbiography}

\begin{IEEEbiography}[{\includegraphics[width=1in,height=1.25in,clip,keepaspectratio]{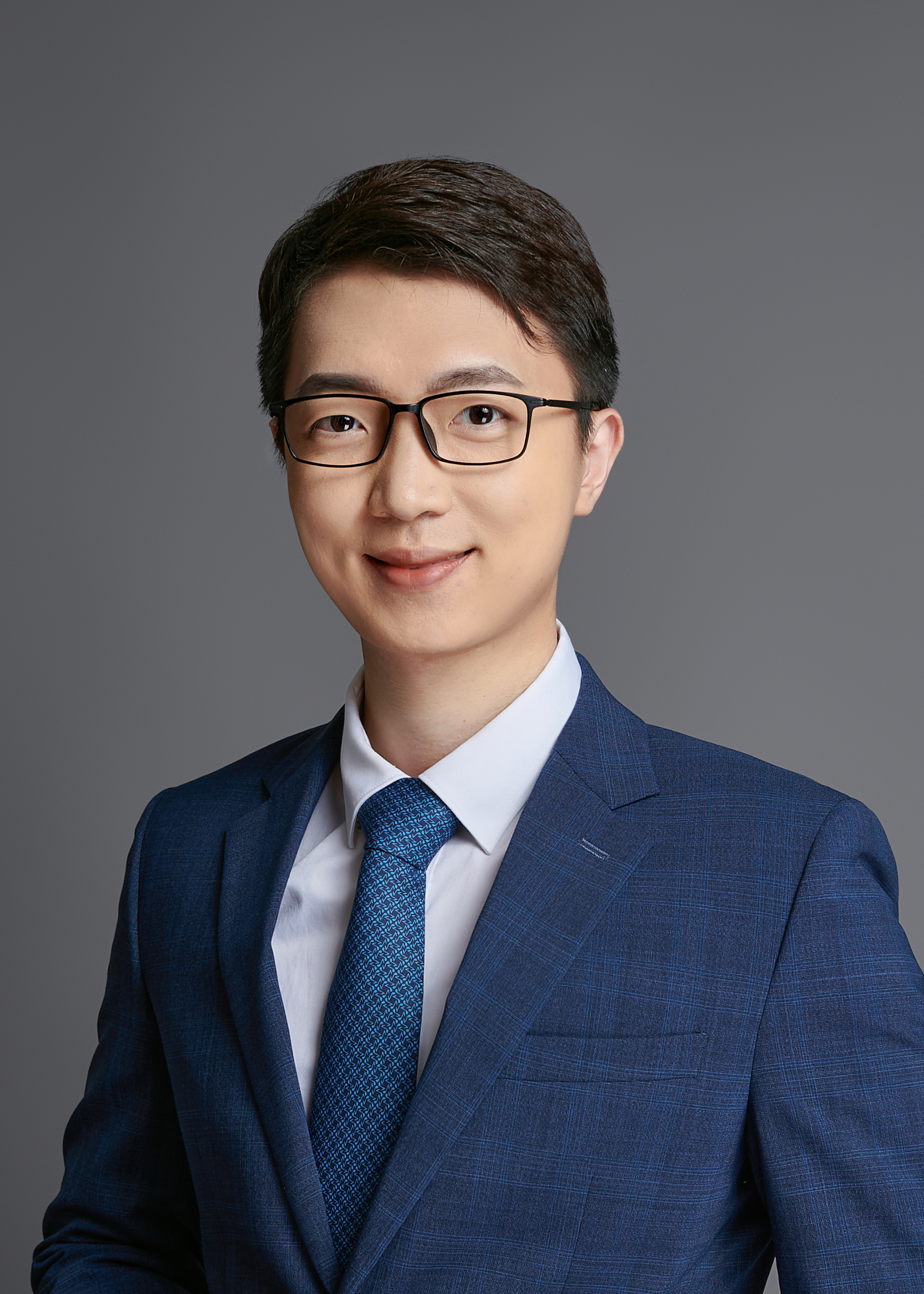}}]{Hongye Guo}

  (S'15-M'20) received the B.S. and Ph.D. degrees in electrical engineering from Tsinghua University, Beijing, China, in 2015 and 2020, respectively. He was a post-doctor in electrical engineering from Tsinghua University, from 2020 to 2022. He was a visiting student researcher with Stanford University, CA, USA, in 2018, and with Illinois Institute of Technology, Chicago, IL, USA, in 2019. 

He is currently a research assistant professor with Tsinghua University. His research interests include electricity markets, demand response, energy economics and machine learning.
  
\end{IEEEbiography}

\begin{IEEEbiography}[{\includegraphics[width=1in,height=1.25in,clip,keepaspectratio]{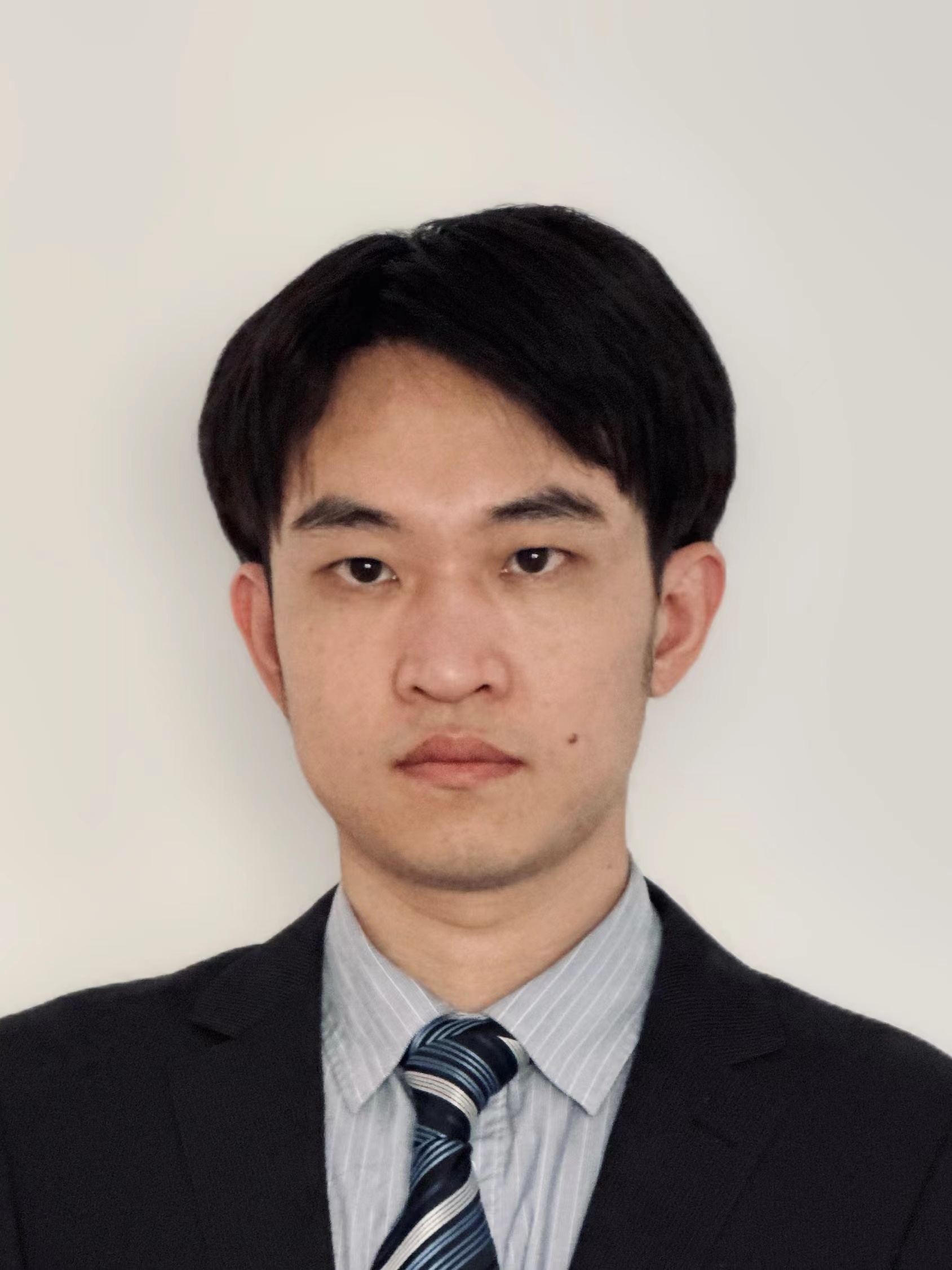}}]{Kedi Zheng}

  (Member, IEEE) received the B.S. and Ph.D. degrees in electrical engineering from Tsinghua University, Beijing, China, in 2017 and 2022, respectively.

He is currently a Postdoctoral Researcher with Tsinghua University. His research interests include data analytics in power systems and electricity markets.
  
\end{IEEEbiography}

\begin{IEEEbiography}[{\includegraphics[width=1in,height=1.25in,clip,keepaspectratio]{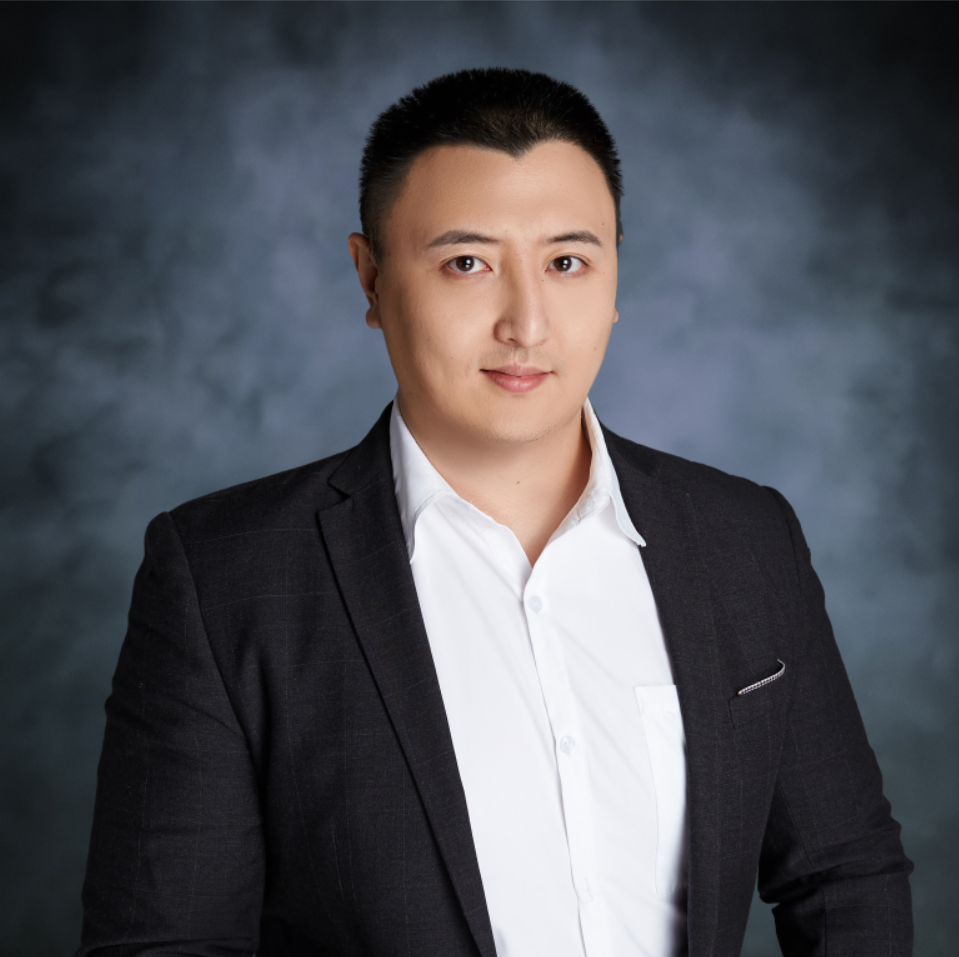}}]{Mingyang Sun}

received the Ph.D. degree from the
Department of Electrical and Electronic Engineering
in Imperial College London, U.K., in 2017. From
2017 to 2019, he was a Research Associate and a
DSI Affiliate Fellow at Imperial College London.
He is now a Professor (Tenure Track) under the
Hundred Talents Program at Zhejiang University. He
is also an Honorary Lecturer at Imperial College
London, UK. His research interests include Artificial
Intelligence in Energy Systems and Cyber Physical
Energy System Security and Control.
  
\end{IEEEbiography}

\begin{IEEEbiography}[{\includegraphics[width=1in,height=1.25in,clip,keepaspectratio]{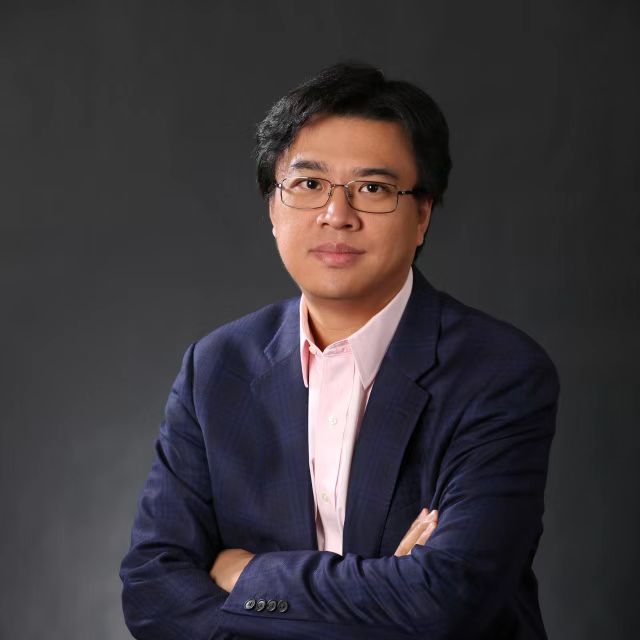}}]{Qixin Chen}

  (Senior Member, IEEE) received the Ph.D. degree from the Department of Electrical Engineering, Tsinghua University, Beijing, China, in 2010. He is currently a Professor with Tsinghua University. 
  
  His research interests include electricity markets, power system economics and optimization, low-carbon electricity, and power generation expansion planning.
  
\end{IEEEbiography}

\end{document}